\tikzstyle{bag} = [align=center]
\newtheorem{theorem}{Theorem}
\newtheorem{definition}{Definition}
\newtheorem{proposition}{Proposition}
\newtheorem{lemma}{Lemma}
\newtheorem{observation}{Observation}
\title{Comparing Social Network Dynamic Operators}
\author{
Edoardo Baccini
\institute{
University of Groningen \\
}
\email{e.baccini@rug.nl}
\and
Zoé Christoff
\institute{
University of Groningen\\}
\email{z.l.christoff@rug.nl}
}
\begin{document}

\maketitle

\begin{abstract}

Numerous logics 
have been developed 
to reason either about 
threshold-induced opinion
diffusion in a network, or about similarity-driven network structure 
evolution, 
or about both.  
In this paper, we first introduce a logic containing
different dynamic operators to capture changes that are `asynchronous' (opinion change only, network-link change only) and changes that are `synchronous' (both at the same time). 
Second, we show that 
synchronous operators cannot, in general, be replaced by 
asynchronous operators and vice versa.  
Third, we characterise the class of models on which the synchronous operator \emph{can} be reduced to sequences of asynchronous operators. 
\end{abstract}

\section{Introduction}

There are two main types of change affecting agents connected through a social network. 
First, the features of an agent, e.g., their opinions or behavior, can be influenced by 
its neighbors in the network: for instance, if one's entire social circle has adopted an opinion in favor of (or against) vaccines, one is unlikely to disagree with this opinion. 
Under this type of social influence, or \textit{social conformity pressure},  network-neighbors 
tend to align their opinions (or any other feature that can change) and therefore become more similar.  
Second, in addition to changing their own state (opinion, or other feature), agents can also reshape their social environment by connecting with others. What generally drives the formation of new links between two agents is their \textit{similarity}. 
Both types of changes relate to how similar agents are: social influence makes network neighbors become more similar while new links make similar agents become more connected \cite[Ch. 4]{networks}.

In social network analysis, a common way of representing both types of dynamics is to assume that certain  \emph{thresholds} drive the dynamics.
On the one hand, 
a typical way of representing social influence is via \textit{threshold models} \cite{Granovetter1978, thr-limited, dodds2009threshold, networks}: agents adopt a feature when a large enough proportion of their network neighbors has already adopted it. 
On the other hand, the formation of new links has been modelled in a similar way. In probabilistic models, it is usual to assume that agents who are more similar are more likely to connect than those who are less similar \cite{talaga2020, BRAMOULLE20121754}. In deterministic models, this has been translated by a similarity threshold: two agents get connected as soon as they are similar enough \cite{howmakefriends, smetslogical2019, Smets2019ALA, smetscloseness-2020}.


Both types of changes have been addressed in logic. Indeed, a number of logical frameworks has flourished to reason about threshold-based social influence \cite{baltagdynamic2018, christoffdiffusion2019, christofflogic2015, Seligmanetal11, Seligmanetal:synthese}, and
about threshold-based link formation \cite{howmakefriends, smetslogical2019, Smets2019ALA, smetscloseness-2020}. 
Yet, to the exception of \cite{Generaldynamicdynamiclogic, pedersensimmelian, baccinichristoffverbrugge2022}, 
either the two aspects have been treated separately \cite{solaki2016logic, howmakefriends, smetslogical2019, Smets2019ALA, smetscloseness-2020, sel-gonzales}, or the two types of changes have been taken to happen one after the other \cite{Smets2019ALA}. 
To our knowledge, only 
\cite{baccinichristoffverbrugge2022} provides a logic capturing specifically  \emph{simultaneous} changes of the network structure and the state of the agents. 

In this paper, we introduce a closely related framework that, similarly to \cite{baccinichristoffverbrugge2022} combines three dynamic operators: one corresponding to the change of the network structure only, one corresponding to the change of the agents feature (opinion/behavior/state) only, and one corresponding to \emph{both changes at the same time}, but restricting ourselves to monotonic changes. We then tackle for this monotonic setting an open question by \cite{Smets2019ALA, baccinichristoffverbrugge2022} 
in the literature:
Can different sequences of dynamic operators be reduced to one another? 

We first introduce the framework in Section~\ref{sec:logic}. We then discuss the (ir)replaceability of the three dynamic operators in Section~\ref{sec:irreplaceability}. We show in particular that our `synchronous' operator cannot always be replaced by any sequences of other operators (Theorem \ref{theorem:genericreplaceability}). 
We also show that, when it can be replaced, the sequence of operators replacing it can only be of four specific types (Theorem~\ref{theorem:necessary}). Finally, we characterize the class of models on which the synchronous operator can be replaced (Theorem \ref{theorem:characterisation}).

\section{Logic of asynchronous and synchronous network changes}\label{sec:logic}

We introduce a logic to reason about asynchronous and synchronous changes in
social networks. 
We use a propositional language (where atoms are parametrized by our sets of agents and features) extended with three dynamic operators $\triangle,\square,\bigcirc$, to capture, respectively, 
diffusion update, network update, and both updates happening simultaneously.


\begin{definition}[Syntax $\mathcal{L}$]
Let $\mathcal{A}$ be a non-empty finite set of agents, $\mathcal{F}$ be a non-empty finite set of features. Let $\Phi_{at}:=\{N_{ab}: a,b\in{\mathcal{A}}\}\cup\{f_a: f\in{\mathcal{F}}, a\in{\mathcal{A}}\}$ 
be the set of atomic formulas.
The syntax $\mathcal{L}$ is the following:
\[ \varphi:= \  N_{ab} \ | \ f_a \ | \ \neg{\varphi} \ | \ \varphi\wedge{\varphi} \ | \ \triangle{\varphi} | \ \square{\varphi} \ | \ \bigcirc{\varphi} \]
where $f\in{\mathcal{F}}$ and $a,b\in \mathcal{A}$.
\label{def:syntax}
\end{definition}

The connectors
$\lor$, $\rightarrow$ and $\leftrightarrow$ are defined as usual. 
$N_{ab}$ is read as 
`agent $a$ is an influencer of agent $b$'; 
$f_a$ 
as `agent $a$ has 
feature $f$'; 
$\triangle\varphi$ as `after a diffusion update, $\varphi$ holds'; 
 $\square\varphi$ as `after a network update, $\varphi$ holds'; 
 $\bigcirc\varphi$ as `after a synchronous update, $\varphi$ holds'.

We now introduce the models representing who is influencing whom and who has which features, and our three different types of updates.



\begin{definition}[Model $M$]
Let $\mathcal{A}$ be a non-empty finite set of agents, $\mathcal{F}$ be a non-empty finite set of features. A model $M$ over $\mathcal{A}$ and $\mathcal{F}$ is a tuple $\langle{\mathcal{N},\mathcal{V},\omega,\tau}\rangle$, where:

\begin{itemize}
    \item $\mathcal{N}\subseteq{\mathcal{A}\times{\mathcal{A}}}$ is a social influence relation;
    \item $\mathcal{V}:\mathcal{A}\longrightarrow{\mathcal{P}(\mathcal{F})}$ is a valuation function, 
    assigning to each agent a set of adopted features;
    \item ${\omega,\tau\in\mathbb{Q}}$ are two rational numbers such that ${0\leq{\omega}\leq{1}}$ and  ${0<{\tau}\leq{1}}$, interpreted, respectively, as similarity threshold and influenceability threshold.
\end{itemize}
We write $C^{\omega\tau}$ for the class of all models for given values of $\omega$ and $\tau$.
\label{def:model}
\end{definition}

We turn to defining the three types of model updates corresponding to 
our three dynamic operators. 
First, after the diffusion (only) update, the set of features each agent adopt is updated. 
Agents might start adopting new features if enough of their neighbors had already adopted them before the update. Note that, while \cite{Smets2019ALA, baccinichristoffverbrugge2022} consider updates in which agents might start abandoning previously adopted features, here we restrict ourselves to the case in which agents are not allowed to start unadopting features, similarly as in \cite{baltagdynamic2018}. 

\begin{definition}[Diffusion update -  $M_{\triangle}$]
Given a model ${M= \langle{\mathcal{N},\mathcal{V},\omega,\tau}\rangle}$, the updated model \linebreak ${M_\triangle=\langle{\mathcal{N},\mathcal{V'},\omega,\tau}\rangle}$
is
such that for any $a,b\in \mathcal{A}$ and any $f\in\mathcal{F}$: 
   
    \[f\in\mathcal{V}'(a) \textrm{ iff } \left\{\begin{array}{lr}
        f\in{\mathcal{V}(a)}, & \text{ if } N(a)=\emptyset \\
        f\in{\mathcal{V}(a)}\textrm{ or }\frac{|N_f(a)|}{|N(a)|}\geq{\tau}, & \text{otherwise}
        \end{array}\right\}\] 
        
        where $N_f(a):=\{b\in{A}: (b,a)\in{\mathcal{N}} \text{ and } f\in{\mathcal{V}(b)}\}$ and $N(a):=\{b\in{A}: (b,a)\in{\mathcal{N}}\}$.
\label{def:mdiffusion} 
\end{definition}

The diffusion update does not affect the network structure. In contrast, the network update only affects the connections, not the features adopted by any of the agents.
After a network update, new links may have formed between agents that agree on sufficiently many features. 
Just as agents could not unadopt previously adopted features, agents cannot break old connections, which differs for instance from \cite{pedersensimmelian, Smets2019ALA}. In this respect, our network update is a monotonic version of that in \cite{Smets2019ALA}.

\begin{definition}[Network update - $M_{\square}$]
Given a model ${M= \langle{\mathcal{N},\mathcal{V},\omega,\tau}\rangle}$, the updated model \linebreak ${M_\square=\langle{\mathcal{N'},\mathcal{V},\omega,\tau}\rangle}$
is
such that for any $a,b\in \mathcal{A}$ and any $f\in\mathcal{F}$: 
    \[(a,b)\in\mathcal{N}' \textrm{ iff } (a,b)\in\mathcal{N}\textrm{ or }\frac{|(\mathcal{V}(a)\cap\mathcal{V}(b))\cup(\mathcal{F}\setminus({\mathcal{V}(a)}\cup{{\mathcal{V}(b)}))|}}{|\mathcal{F}|}\geq{\omega}\]
\label{def:modelupdate} 
\end{definition}




Third, the synchronous 
update affects features and connections 
at once. Adoption of new features happens under the same conditions as with the diffusion update, and new links are created under the same conditions as with the network update.

\begin{definition}[Synchronous update - $M_{\bigcirc}$]
Let ${M={\langle{\mathcal{N},\mathcal{V},\omega,\tau}\rangle}}$, the model resulting from synchronous 
update
is 
${M_{\bigcirc}:=\langle{\mathcal{N}',\mathcal{V'},\omega,\tau}\rangle}$,  where $\mathcal{N}'$ is as in Definition~\ref{def:modelupdate} and $\mathcal{V}'$
is as in Definition~\ref{def:mdiffusion}.
\label{def:diachronic}
\end{definition}

Now that we have defined the model-updates, we can introduce the semantic clauses for formulas containing  the corresponding operators.  

\begin{definition}[Satisfaction]
For any model $M=\langle{\mathcal{N},\mathcal{V},\omega,\tau}\rangle$ and any formula $\varphi\in\mathcal{L}$, the truth of $\varphi$ in $M$ is inductively defined as follows:
\begin{multicols}{2}
\begin{itemize}
    \item[] $M\models{f_a}$ if and only if $f\in{\mathcal{V}(a)}$
    \item[] $M\models{N_{ab}}$ if and only if $(a,b)\in{\mathcal{N}}$ 
    \item[] $M\models{\neg\varphi}$ if and only if $M\not\models{\varphi}$
    \item[] $M\models{\varphi\wedge\psi}$ if and only if $M\models\varphi$ and $M\models{\psi}$
    \item[] $M\models{\square\varphi}$ if and only if $M_{\square}\models{\varphi}$
    \item[] $M\models{\triangle\varphi}$ if and only if $M_{\triangle}\models{\varphi}$
    \item[] $M\models{\bigcirc\varphi}$ if and only if $M_{\bigcirc}\models{\varphi}$

\end{itemize}
\end{multicols}
where $M_{\triangle}$ is the updated model as in Definition~\ref{def:mdiffusion}, and $M_{\square}$ is the updated model 
as in  Definition~\ref{def:modelupdate}, and $M_{\bigcirc}$ is the updated model as in Definition~\ref{def:diachronic}.
\label{def:semantics}
\end{definition}

As usual, we say that a formula is valid in a class of models if it is true in all models of that class and valid (tout court) if it is valid in all models.

\begin{observation}
    Let $M_1=\langle{\mathcal{N}_1,\mathcal{V}_1,\omega,\tau}\rangle$ and $M_2=\langle{\mathcal{N}_2,\mathcal{V}_2,\omega,\tau}\rangle$ be two models. The following are equivalent:
    \begin{itemize}
    \item for all $\varphi_{at}\in\Phi_{at}$, $M_1\models\varphi_{at}$ iff $M_2\models\varphi_{at}$
    \item for all $\varphi\in\mathcal{L}$, $M_1\models\varphi$ iff $M_2\models\varphi$
    \item $M_1=M_2$  
    \end{itemize}
    \label{obs:first}
\end{observation}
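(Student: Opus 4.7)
The plan is to establish the cyclic chain of implications $(3) \Rightarrow (2) \Rightarrow (1) \Rightarrow (3)$, which is the most economical structure given that two of the three links are essentially bookkeeping. The direction $(3) \Rightarrow (2)$ is proved by a routine induction on the structure of $\varphi$. The atomic cases follow because $M_1 = M_2$ means $\mathcal{N}_1 = \mathcal{N}_2$ and $\mathcal{V}_1 = \mathcal{V}_2$, so by Definition~\ref{def:semantics} the atoms $N_{ab}$ and $f_a$ have the same truth value in both. The Boolean cases are immediate. For the dynamic modalities $\triangle$, $\square$, $\bigcirc$, one observes that each of the updates in Definitions~\ref{def:mdiffusion}, \ref{def:modelupdate} and \ref{def:diachronic} is a deterministic function of the underlying tuple $\langle\mathcal{N},\mathcal{V},\omega,\tau\rangle$; hence $M_1 = M_2$ implies $(M_1)_\triangle = (M_2)_\triangle$, $(M_1)_\square = (M_2)_\square$ and $(M_1)_\bigcirc = (M_2)_\bigcirc$, and the inductive hypothesis applies at the updated models.

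The implication $(2) \Rightarrow (1)$ is immediate, since every atom $\varphi_{at} \in \Phi_{at}$ is itself a formula of $\mathcal{L}$, so agreement on all of $\mathcal{L}$ entails agreement on $\Phi_{at}$ as a special case.

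The only direction that requires a small argument is $(1) \Rightarrow (3)$. Both models share the same thresholds $\omega$ and $\tau$ by the hypothesis of the observation, so it suffices to show that $\mathcal{N}_1 = \mathcal{N}_2$ and $\mathcal{V}_1 = \mathcal{V}_2$. For the relation, for every pair $(a,b) \in \mathcal{A} \times \mathcal{A}$ we have $(a,b) \in \mathcal{N}_i$ iff $M_i \models N_{ab}$; since $M_1$ and $M_2$ agree on every such atom, $\mathcal{N}_1$ and $\mathcal{N}_2$ contain exactly the same pairs. For the valuation, for every $a \in \mathcal{A}$ and $f \in \mathcal{F}$ we have $f \in \mathcal{V}_i(a)$ iff $M_i \models f_a$; agreement on every $f_a$ therefore forces $\mathcal{V}_1(a) = \mathcal{V}_2(a)$ pointwise, whence $\mathcal{V}_1 = \mathcal{V}_2$.

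There is no genuine obstacle: the observation records that the language $\mathcal{L}$ is already expressive enough at the atomic level to fully encode the static data of a model once $\omega$ and $\tau$ are fixed, and that all three updates are deterministic. The mildly delicate point is just to remember, in the $(3) \Rightarrow (2)$ induction, that one must invoke the determinism of the update functions at the modal step rather than trying to unfold $\triangle$, $\square$ and $\bigcirc$ into Boolean combinations of atoms.
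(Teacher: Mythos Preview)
Your proof is correct. The paper does not supply a proof of this observation at all; it is stated without argument and then used freely in later proofs. Your cyclic argument $(3)\Rightarrow(2)\Rightarrow(1)\Rightarrow(3)$ is the natural one and each step is sound.

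One minor remark: the direction $(3)\Rightarrow(2)$ is in fact even more immediate than you present it. If $M_1=M_2$ as tuples, then they are literally the same object, and satisfaction in Definition~\ref{def:semantics} is a function of the model; hence $M_1\models\varphi$ iff $M_2\models\varphi$ for every $\varphi$ without any induction. Your induction and the appeal to determinism of the updates are not wrong, but they do more work than is needed, and the closing comment that this is the ``mildly delicate point'' overstates the difficulty. The only direction with any content is $(1)\Rightarrow(3)$, which you handle correctly by reading off $\mathcal{N}$ and $\mathcal{V}$ from the atoms $N_{ab}$ and $f_a$.
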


We introduce the following two abbreviations 
capturing, respectively,  when an agent $a$ has sufficient pressure to adopt a feature ($f^\tau_{N(a)}$), and when two agents $a$ and $b$ have sufficient similarity to 
connect ($sim^{\omega}_{ab}$). 


\[f^{\tau}_{N(a)
}:= {\bigvee_{\{G\subseteq{N}\subseteq{A},\textrm{ } N\neq{\emptyset} \textrm{ }: \frac{|G|}{|N|}\geq{\tau}\}}(\bigwedge_{b\in{N}}N_{ba} \wedge \bigwedge_{b\not\in{N}}\neg{N_{ba}}\wedge\bigwedge_{b\in{G}}f_b)}\] 


\[{sim^{\omega}_{ab} := \bigvee_{\{E\subseteq{\mathcal{F}}: \frac{|E|}{|\mathcal{F}|}\geq{\omega}\}}\bigwedge_{f\in{E}}(f_a\leftrightarrow{f_b})}\]



    
      

These abbreviations can then be used to obtain reduction axioms for each of the dynamic modalities in $\mathcal{L}$, which are shown in Table~\ref{table:axiom1}.
The reduction axioms for the dynamic operators in $\mathcal{L}$ are very similar to those in other dynamic logics of social network change. 
Indeed, the reduction axioms for the operator $\triangle$ are the same as the those of the dynamic operator $[adopt]$ in \cite{baltagdynamic2018}, with the exception that our logic captures multiple diffusing features and thus contains reduction axioms for each spreading feature in $\mathcal{F}$. In this sense, they resemble the reduction axioms in \cite{Smets2019ALA} with the difference that in our setting features cannot be unadopted. 
Moreover, the reduction axioms for the operator $\square$ are similar to those in \cite{Smets2019ALA}, with the difference that 
our framework does not allow for link deletion.
The reduction axioms for the operator $\bigcirc$ merely reflect the fact that both features and links are affected by a synchronous update. 

 We will investigate how and when operators can replace one another in the next section. Before that, by looking at our axioms, we can immediately observe that an operator can replace another when it precedes specific formulas: 


\begin{observation}
    Let ${M={\langle{\mathcal{N},\mathcal{V},\omega,\tau}\rangle}}$ be a model. For all $a,b\in\mathcal{A}$, for all $f\in\mathcal{F}$:
    \begin{multicols}{2}
    \begin{itemize}
        \item $M\models \bigcirc f_a$ iff $M\models \triangle f_a$
        \item if $M\models \square f_a$, then $M\models \bigcirc f_a$
        \item $M\models \bigcirc N_{ab}$ iff $M\models \square N_{ab}$
        \item if $M\models \triangle N_{ab}$ then $M\models \bigcirc N_{ab}$
        
    \end{itemize}
    \end{multicols}
    \label{observation:inclusion}
\end{observation}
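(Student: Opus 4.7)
The plan is to unpack each of the four claims by appealing directly to Definitions~\ref{def:mdiffusion}, \ref{def:modelupdate}, and~\ref{def:diachronic}, and observing two structural facts about the synchronous update: (i) the valuation component $\mathcal{V}'$ of $M_\bigcirc$ is computed from the original $\mathcal{N}$ exactly as in the diffusion update, not from $\mathcal{N}'$; and (ii) the network component $\mathcal{N}'$ of $M_\bigcirc$ is computed from the original $\mathcal{V}$ exactly as in the network update, not from $\mathcal{V}'$. No interaction between the two updates is hidden inside the synchronous one.

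For the first item, I would unfold $M\models \bigcirc f_a$ to $f\in \mathcal{V}'(a)$ in $M_\bigcirc$, and observe that by Definition~\ref{def:diachronic} this $\mathcal{V}'$ is literally the valuation produced by Definition~\ref{def:mdiffusion} applied to $M$. Hence $f\in \mathcal{V}'(a)$ in $M_\bigcirc$ iff $f\in \mathcal{V}'(a)$ in $M_\triangle$, i.e.\ $M\models \triangle f_a$. The third item is entirely symmetric: $M\models \bigcirc N_{ab}$ unfolds to $(a,b)\in \mathcal{N}'$ in $M_\bigcirc$, and that $\mathcal{N}'$ is by Definition~\ref{def:diachronic} the one given by Definition~\ref{def:modelupdate}, so it coincides with $\mathcal{N}'$ in $M_\square$, giving $M\models \square N_{ab}$.

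For the second item, suppose $M\models \square f_a$. Since the network update in Definition~\ref{def:modelupdate} leaves $\mathcal{V}$ unchanged, this means $f\in \mathcal{V}(a)$ already in $M$. By Definition~\ref{def:mdiffusion}, having $f\in \mathcal{V}(a)$ suffices to guarantee $f\in \mathcal{V}'(a)$ (both in the empty-in-neighbourhood case and the otherwise case, since the clause is a disjunction with $f\in\mathcal{V}(a)$). Thus $f\in \mathcal{V}'(a)$ in $M_\bigcirc$, so $M\models \bigcirc f_a$. For the fourth item, if $M\models \triangle N_{ab}$, then, since the diffusion update leaves $\mathcal{N}$ unchanged, already $(a,b)\in \mathcal{N}$ in $M$; by Definition~\ref{def:modelupdate} this forces $(a,b)\in \mathcal{N}'$ in $M_\bigcirc$, hence $M\models \bigcirc N_{ab}$.

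There is no genuine obstacle here; the whole point is to make transparent that monotonicity (links and features are only added, never removed) together with the fact that Definition~\ref{def:diachronic} reuses the \emph{original} $\mathcal{N}$ and $\mathcal{V}$ in its two components rather than chaining them, makes all four clauses immediate. The observation is a warm-up making explicit the small asymmetries (equivalence on atoms of the ``matching'' type, one-way implication on atoms of the ``other'' type) that the later (ir)replaceability results in Section~\ref{sec:irreplaceability} will exploit.
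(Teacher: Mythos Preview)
Your proposal is correct and essentially matches the paper's justification: the paper simply remarks that the observation follows ``by looking at our axioms'' (the reduction axioms in Table~\ref{table:axiom1}), which encode exactly the facts you extract directly from Definitions~\ref{def:mdiffusion}--\ref{def:diachronic}. Your version is slightly more self-contained since it argues at the level of the semantic definitions rather than via the reduction axioms, but the content is the same.
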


\begin{table}[t!]
    \centering
\begin{tabular}{ c | c | c}
$\square{N_{ab}}\leftrightarrow{N_{ab}\lor{sim^{\omega}_{ab}}}$ & $\triangle{N_{ab}}\leftrightarrow{N_{ab}}$ &  $\bigcirc{N_{ab}}\leftrightarrow{N_{ab}\lor{sim^{\omega}_{ab}}}$\\
 $\square{f_{a}}\leftrightarrow{f_a}$ & $\triangle{f_{a}}\leftrightarrow{f_a\lor{f^\tau_{N(a)}}}$ &  $\bigcirc{f_{a}}\leftrightarrow{f_a\lor{f^\tau_{N(a)}}}$\\

${\square(\varphi\wedge{\psi})}\leftrightarrow{\square{\varphi}\wedge{\square{\psi}}}$ & ${\triangle(\varphi\wedge{\psi})}\leftrightarrow{\triangle{\varphi}\wedge{\triangle{\psi}}}$ &  ${\bigcirc(\varphi\wedge{\psi})}\leftrightarrow{\bigcirc{\varphi}\wedge{\bigcirc{\psi}}}$\\
    $\square{\neg{\varphi}}\leftrightarrow{\neg{\square{\varphi}}}$ & $\triangle{\neg{\varphi}}\leftrightarrow{\neg{\triangle{\varphi}}}$ & $\bigcirc{\neg{\varphi}}\leftrightarrow{\neg{\bigcirc{\varphi}}}$\\
    \multicolumn{3}{c}{}\\
    \multicolumn{3}{c}{From $\varphi_1\leftrightarrow\varphi_2$, infer that $\varphi\leftrightarrow\varphi[\varphi_1/\varphi_2]$, where $\varphi[\varphi_1/\varphi_2]$ is a formula }\\
    \multicolumn{3}{c}{  obtained by replacing one or more occurrences of $\varphi_1$ with $\varphi_2$}

 \end{tabular}
 \caption{Reduction Axioms and derivation rule for the dynamic modalities $\square,\triangle,\bigcirc$.
 }
 \label{table:axiom1}
\end{table}

\begin{definition}[Logic $L^{\omega\tau}$ 
] Let $\omega\in{[0,1]}$ and $\tau\in(0,1]$ be two rational numbers.
The Logic $L^{\omega\tau}$ consists of some complete axiomatisation and derivation rules of propositional logic, together with the reduction axioms and the derivation rule in Table~\ref{table:axiom1}. 
    
\end{definition}

\begin{theorem}
    Let $\omega\in{[0,1]}$ and $\tau\in(0,1]$ be two rational numbers. For any $\varphi\in\mathcal L$: $\models_{\mathcal{C}^{\omega\tau}}\varphi \textrm{ iff } \vdash_{L^{\omega\tau}}\varphi\ $
\label{theorem:completeness}
\end{theorem}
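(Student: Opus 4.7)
The plan is the standard reduction proof for dynamic logics whose only dynamic axioms are reduction axioms. Soundness is by axiom-by-axiom verification against Definitions~\ref{def:mdiffusion}--\ref{def:diachronic}. For instance, $\square N_{ab} \leftrightarrow N_{ab} \vee sim^{\omega}_{ab}$ holds because, by Definition~\ref{def:modelupdate}, after a network update $(a,b)\in\mathcal{N}'$ iff $(a,b)\in\mathcal{N}$ or the similarity ratio between $a$ and $b$ meets the threshold $\omega$, and the macro $sim^{\omega}_{ab}$ is precisely the disjunction over all feature-configurations realising a similarity of at least $\omega$. The verification of $\triangle f_a \leftrightarrow f_a \vee f^{\tau}_{N(a)}$ is analogous using the threshold condition from Definition~\ref{def:mdiffusion}; the axioms for $\bigcirc$ combine the two; the boolean distribution and negation axioms reflect the fact that each update is a function on models; and the replacement rule preserves validity by a routine induction on formula structure.

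For completeness, the key step is a \emph{reduction lemma}: every $\varphi\in\mathcal{L}$ is $L^{\omega\tau}$-provably equivalent to a dynamic-operator-free formula $\varphi^{\ast}\in\mathcal{L}$. I would define $\varphi^{\ast}$ via a two-level recursion. First, for each $\heartsuit\in\{\triangle,\square,\bigcirc\}$, define a push-inside operation $r_{\heartsuit}$ on purely propositional formulas of $\mathcal{L}$ by structural recursion on the formula: use the atomic reduction axioms from Table~\ref{table:axiom1} at the leaves $N_{ab}$ and $f_a$, and the distribution/negation axioms to commute $\heartsuit$ with $\wedge$ and $\neg$. The output of $r_{\heartsuit}$ is again purely propositional. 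Then define the outer translation $t(\cdot)$ on all of $\mathcal{L}$ by $t(p)=p$ for $p\in\Phi_{at}$, $t(\neg\varphi)=\neg t(\varphi)$, $t(\varphi\wedge\psi)=t(\varphi)\wedge t(\psi)$, and $t(\heartsuit\varphi)=r_{\heartsuit}(t(\varphi))$. A structural induction on $\varphi$ then shows simultaneously that $t(\varphi)$ is propositional and $\vdash_{L^{\omega\tau}}\varphi\leftrightarrow t(\varphi)$; the dynamic case uses the replacement rule to substitute $\psi$ by $t(\psi)$ inside $\heartsuit\psi$ (which handles nested operators such as $\square\bigcirc\varphi$), followed by an inner induction on the now-propositional $t(\psi)$ showing $\vdash\heartsuit t(\psi)\leftrightarrow r_{\heartsuit}(t(\psi))$.

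From the reduction lemma, completeness follows immediately. Suppose $\models_{\mathcal{C}^{\omega\tau}}\varphi$. By soundness, $\models_{\mathcal{C}^{\omega\tau}}t(\varphi)$. Every propositional valuation of $\Phi_{at}$ is realised by some model in $\mathcal{C}^{\omega\tau}$ (simply read $\mathcal{N}$ and $\mathcal{V}$ off the valuation and keep the fixed $\omega,\tau$; this is essentially Observation~\ref{obs:first}), so $t(\varphi)$ holds under every propositional valuation and is a tautology over $\Phi_{at}$; hence it is derivable from the assumed complete axiomatisation of propositional logic, giving $\vdash_{L^{\omega\tau}}t(\varphi)$ and then $\vdash_{L^{\omega\tau}}\varphi$ via $\vdash\varphi\leftrightarrow t(\varphi)$. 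The main point requiring care is the two-level recursion behind the reduction lemma: getting the replacement rule to do the right work on nested dynamic operators, rather than trying to force the whole argument through a single naive complexity measure, since the right-hand sides of the atomic reduction axioms contain propositional macros ($sim^{\omega}_{ab}$, $f^{\tau}_{N(a)}$) whose size depends on $|\mathcal{A}|,|\mathcal{F}|,\omega,\tau$ rather than on the formula being reduced.
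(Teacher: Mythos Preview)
Your proposal is correct and follows essentially the same approach as the paper's own (sketched) proof: soundness by axiom-by-axiom verification against the update definitions (the paper, like you, works out $\square N_{ab}\leftrightarrow N_{ab}\lor sim^{\omega}_{ab}$ as its representative example), and completeness via the standard reduction-translation into the propositional fragment, which the paper merely cites as ``the standard way'' with references. Your write-up is in fact more explicit than the paper about the two-level recursion and the realisability of arbitrary propositional valuations by models in $\mathcal{C}^{\omega\tau}$.
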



    The proof uses standard techniques and is very similar to that of the related settings in \cite{baltagdynamic2018,Smets2019ALA,baccinichristoffverbrugge2022}: a sketch 
    is included in the \hyperref[appendix]{Appendix}.

\section{Irreplaceability of synchronous operators}\label{sec:irreplaceability}

Given that our dynamic formulas are reducible to the static 
fragment of our language, the question of comparing the expressivity of fragments of our language excluding one or two of the dynamic operators is uninteresting. In contrast, what is interesting, as suggested already in \cite{Smets2019ALA, baccinichristoffverbrugge2022},
is to compare whether formulas containing some (specific combinations of) dynamic operators could be translated into formulas containing  other (combinations of) dynamic operators. Another way to put it, closer to the way \cite{Smets2019ALA} first introduces the question, is to ask when different sequences of different model updates result in the same model.   



To be able to investigate the extent to which our dynamic operators are inter-translatable or not (beyond the atomic preceding cases mentioned in Observation~\ref{observation:inclusion}), we first have to introduce some notation and define the relevant type of expressivity criteria. 

\begin{definition}[Notation for sequences of operators] Let $D=\{\bigcirc,\triangle, \square\}$. For $O\subseteq{D}$, $S_O$ denotes the set of all non-empty finite sequences of operators in 
$O$. We write $d_1d_2...d_n$ for the sequence $\langle{d_1,d_2,...,d_n}\rangle$ 
and $d^n$ for the sequence consisting of $n\in \mathbb{N}$ repetitions of $d\in D$.  
We denote by $s^{j:k}$ the subsequence of $s$ starting with the $j$-th element of $s$ and ending with the $k$-th element of $s$. 
Given two sequences $s_1, s_2$ of lengths $n,m\in\mathbb{N}$, respectively, we write $s_1s_2$ for the sequence of length $n+m$ obtained by prefixing $s_1$ to $s_2$. 
\end{definition}

\begin{definition}[Equivalence of sequences]
Two sequences $s_1,s_2
\in{S_D}$ are equivalent on a model $M$ when $M_{s_1}=M_{s_2}$, or, equivalently (by Observation \ref{obs:first}), when for all $\varphi\in\mathcal{L}$, $M\models{s_1}\varphi$ 
if and only if $M\models{s_2}\varphi$. Two sequences $s_1$ and $s_2$ are equivalent over a class of models 
when they are equivalent over all models in the class. 
Two sequences are equivalent (tout court) when they are equivalent over the class of all models.  
\label{def:equivalences}
\end{definition}

We start by making some observations about sequences of $\triangle$ and $\square$ operators.

\begin{observation}
    Let a model ${M= \langle{\mathcal{N},\mathcal{V},\omega,\tau}\rangle}$ be given.
    \begin{itemize}
        \item Any sequence $s\in{S_{\{\square\}}}$ is equivalent to the sequence $\square$ on $M$.
        \item There exists an $n<{|\mathcal{A}|}$, such that, for any $m>n$, $\triangle^m$ is equivalent to $\triangle^n$ on $M$.
    \end{itemize}
 \label{lemma:bound}
\end{observation}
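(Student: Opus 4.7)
For the first bullet, I plan to prove the stronger claim that $(M_\square)_\square = M_\square$, from which the equivalence of any non-empty sequence $s \in S_{\{\square\}}$ with $\square$ follows by a straightforward induction on the length of $s$. The key observation is that the network update leaves the valuation $\mathcal{V}$ untouched, so the similarity predicate $sim^{\omega}_{ab}$, which depends only on $\mathcal{V}$, has the same truth value in $M$ and in $M_\square$. Consequently, applying $\square$ a second time can only add edges $(a,b)$ for which $sim^{\omega}_{ab}$ holds; but all such edges were already added in the first step, so the network is stable after one step and $(M_\square)_\square = M_\square$.

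For the second bullet, my strategy rests on two facts about $\triangle$: it leaves $\mathcal{N}$ untouched, and the update condition for a feature $f$, namely $|N_f(a)|/|N(a)| \geq \tau$, depends only on $\mathcal{N}$ and on the set of $f$-adopters. Hence the diffusions of distinct features are completely independent, and, combined with the fact that features can only be added under $\triangle$ (never removed), this reduces the problem to analyzing each feature separately over a fixed network.

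Concretely, fix $f \in \mathcal{F}$ and write $V_f^t := \{a \in \mathcal{A} : f \in \mathcal{V}_t(a)\}$, where $\mathcal{V}_t$ denotes the valuation of $M_{\triangle^t}$. The sequence $(V_f^t)_t$ is monotone and contained in $\mathcal{A}$, and since the update is deterministic, once $V_f^{t+1} = V_f^t$ this equality persists forever; let $T_f$ be the least such $t$. If $V_f^0 = \emptyset$, then the threshold condition fails for every agent (because $\tau > 0$), so $V_f^t = \emptyset$ for all $t$ and $T_f = 0$. Otherwise $V_f^0 \subsetneq V_f^1 \subsetneq \dots \subsetneq V_f^{T_f}$ is a strict chain of subsets of $\mathcal{A}$, which forces $T_f \leq |\mathcal{A}| - |V_f^0| \leq |\mathcal{A}| - 1$. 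Setting $n := \max_{f \in \mathcal{F}} T_f$ then yields $n < |\mathcal{A}|$, and for any $m > n$ every feature has stabilized, so $\mathcal{V}_m = \mathcal{V}_n$ and (with the network unchanged throughout) $M_{\triangle^m} = M_{\triangle^n}$.

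The only subtle point is securing the bound $n < |\mathcal{A}|$ rather than the naive $|\mathcal{A}|\cdot|\mathcal{F}|$; this is where the feature-independence observation is essential, since it allows taking a maximum over features instead of a sum.
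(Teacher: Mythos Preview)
Your proposal is correct and matches the paper's own justification: for the first point the paper simply asserts that the network update is idempotent (which is exactly what you prove by observing that $\mathcal{V}$, and hence $sim^{\omega}_{ab}$, is unchanged by $\square$), and for the second point the paper defers to \cite{baltagdynamic2018}, whose argument is essentially the monotone-chain bound you spell out, with your feature-independence observation handling the extension to multiple features. You have effectively supplied the details the paper omits, via the same route.
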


    The first point follows from the fact that the model update in Definition~\ref{def:modelupdate} is idempotent, and therefore $M\models\square^n \varphi$ if and only if $M\models\square \varphi$.
    A proof of the second point can be found in \cite{baltagdynamic2018}.

We then lift this notion of equivalence between sequences to an existential notion between sets of sequences, so that we can compare the different dynamic fragments of our language.

\begin{definition}[Replaceability of sets]
Let $S_1,S_2
\subseteq{S_D}$
be two sets of sequences. 
The set $S_1$ is replaceable with the set $S_2$ in a model $M$, when, for all sequences $s_1\in{S_1}$, there exists 
a sequence $s_2\in{S_2}$ that is equivalent to $s_1$ in $M$.
$S_1$ is replaceable with $S_2$ over a class of model 
when it is repleaceable with $S_2$ in all models of the class.
$S_1$ is replaceable with $S_2$ (tout court) when it is repleaceble with $S_2$ over the class of all models.



\label{replaceability}
\end{definition}

When comparing our dynamic operators, it is easy to see that $S_{\{\square\}}$ (and therefore any superset of it) is not replaceable by $S_{\{\triangle\}}$ and, vice versa, that $S_{\{\triangle\}}$ (and therefore any superset of it) is not replaceable by $S_{\{\square\}}$ and similarly for $S_{\{\square\}}$ and $S_{\{\bigcirc\}}$, and $S_{\{\triangle\}}$ and $S_{\{\bigcirc\}}$, which  implies that $S_{\{\square,\triangle\}}$ is not replaceable with $S_{\{\bigcirc\}}$. The only interesting question is: can  we replace our synchronous operator?
\begin{theorem}
    $S_{\{\bigcirc\}}$ is not replaceable with $S_{\{\square,\triangle\}}$.
    \label{theorem:genericreplaceability}
\end{theorem}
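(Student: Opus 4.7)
The plan is to exhibit one concrete model $M$ on which the one-step sequence $\bigcirc$ has no equivalent in $S_{\{\square,\triangle\}}$. I take $\mathcal{A}=\{a,b,c\}$, $\mathcal{F}=\{f\}$, thresholds $\tau=\omega=1$, initial valuation $\mathcal{V}(a)=\{f\}$ and $\mathcal{V}(b)=\mathcal{V}(c)=\emptyset$, and initial relation $\mathcal{N}=\{(a,b)\}$. The point of this choice is to enable two independent simultaneous updates: $b$'s only in-neighbor is $a$, who has $f$, so a diffusion step makes $b$ adopt $f$; and $b,c$ agree on every feature, so a network step creates the edge $(b,c)$. Hence $M_{\bigcirc}$ satisfies the witness formula $N_{bc}\wedge f_b$, and the goal is to show that no asynchronous sequence can make both conjuncts hold.

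The core observation is that these two effects conflict in any asynchronous order. If $\triangle$ fires first, then $b$ acquires $f$ while $c$ does not, so $b$ and $c$ cease to be $\omega$-similar and no later $\square$ can introduce $(b,c)$. Conversely, if $\square$ fires first, then from that point on $b$'s in-neighborhood contains the new self-loop $(b,b)$; since $b$ does not yet have $f$ and $\tau=1$, a subsequent $\triangle$ can no longer make $b$ adopt $f$, and by monotonicity the blockage persists. So already no length-two prefix achieves the combination.

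To upgrade this to arbitrary sequences I would prove a finite-reachability lemma by induction on $|s|$: the only models reachable from $M$ via $S_{\{\square,\triangle\}}$ are the four models $\{M,\,M_\square,\,M_\triangle,\,M_{\triangle\square}\}$. The inductive step reduces to six direct computations in this small model: $\square$ is idempotent everywhere (Observation~\ref{lemma:bound}); $(M_\square)_\triangle = M_\square$ because the post-$\square$ in-neighborhood of $b$ is $\{a,b\}$ and $b$ is $f$-free, violating $\tau=1$; and both $(M_\triangle)_\triangle = M_\triangle$ and $(M_{\triangle\square})_\triangle = M_{\triangle\square}$ because $c$'s in-neighborhood in those models contains nobody with $f$. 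A direct inspection of the four reachable models then shows that none of them satisfies $N_{bc}\wedge f_b$, whereas $M_{\bigcirc}$ does; thus no $s\in S_{\{\square,\triangle\}}$ is equivalent to $\bigcirc$ on $M$. The main obstacle is precisely this reachability step: I have to rule out that some long alternation of $\square$'s and $\triangle$'s revives one of the two blocked changes. Monotonicity of the updates and idempotence of $\square$ are what keep the reachable set finite and small, reducing the argument to short but careful bookkeeping.
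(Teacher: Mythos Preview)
Your argument is correct and follows the same overall strategy as the paper---exhibit a single concrete model on which $\bigcirc$ cannot be matched by any sequence in $S_{\{\square,\triangle\}}$, then show that the $\{\square,\triangle\}$-reachable set is finite and contains no model satisfying a chosen witness formula. The execution differs in two useful ways. First, your model is strictly simpler: one feature and unanimity thresholds $\omega=\tau=1$, as opposed to the paper's three features and $\omega=\tau=\tfrac12$. Second, you use a single conjunctive witness $N_{bc}\wedge f_b$, whereas the paper splits into the cases ``$s$ starts with $\triangle$'' and ``$s$ starts with $\square$'' and uses a different atomic witness in each branch. Your packaging via a closed reachable set $\{M_\square,M_\triangle,M_{\triangle\square}\}$ is cleaner and makes the induction on $|s|$ explicit; the paper instead argues stability of $M_{\triangle\square}$ and $M_{\square}$ in prose.

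One small computational slip to fix: after $\square$, the in-neighborhood of $b$ is $\{a,b,c\}$, not $\{a,b\}$, since $b$ and $c$ are $\omega$-similar and the edge $(c,b)$ is also added. Your conclusion is unaffected (indeed your earlier sentence already gave the robust reason: once the self-loop $(b,b)$ exists and $b$ lacks $f$, the ratio is strictly below $1$), but the stated neighborhood should be corrected. Similarly, in $M_{\triangle\square}$ note that the edge $(b,a)$ is created (since $a$ and $b$ now agree on $f$); this does not change your stability check for $c$, but it is worth recording so that the ``four reachable models'' are described accurately.
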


\begin{figure}
    \centering
    \includegraphics[trim={2 1 1 1}, clip, width=0.85\linewidth]{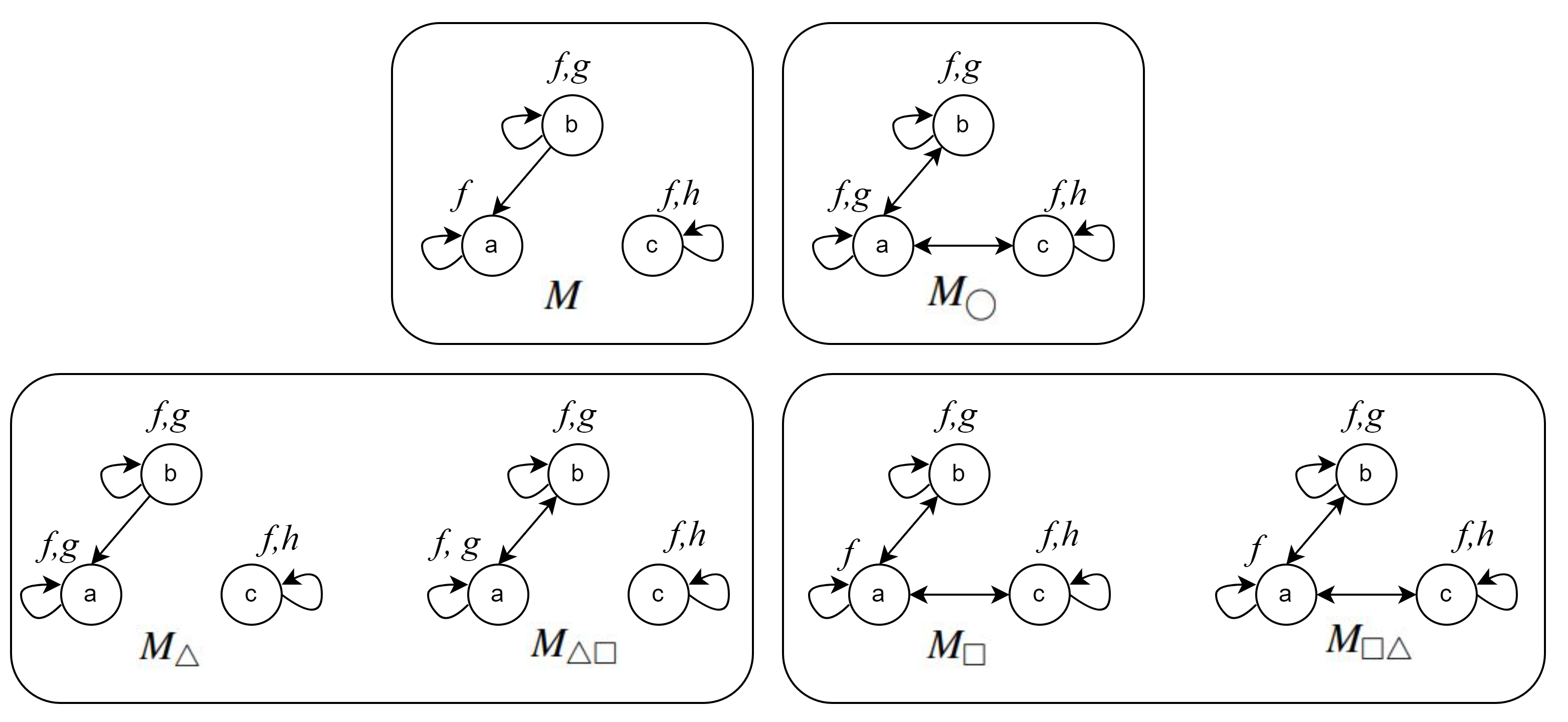}
    \caption{The figure represents the model $M$, and its model updates $M_\bigcirc$, $M_\triangle$, $M_\square$, $M_{\triangle\square}$, $M_{\square\triangle}$ considered in Theorem~\ref{theorem:genericreplaceability}. The model $M$ is as follows: $\mathcal{A}=\{a,b,c\}$, $\mathcal{F}=\{f,g,h\}$, $\omega=\tau=\frac{1}{2}$. For each model, each agent is represented as a node, and the influence of one agent on another agent is represented as a directed arrow from the influencing node to the influenced node. Next to each node, all and only the features in $\mathcal{F}$ that an agent possess are reported.}
    \label{fig:fig1}
\end{figure}

\begin{proof}

     \noindent We show that there is no sequence in $S_{\{\triangle,\square\}}$ that is equivalent to the sequence $\bigcirc$ on all models. 
Assume, towards a contradiction that there exists a sequence $s\in{S_{\{\triangle,\square\}}}$ equivalent to $\bigcirc$ in the model $M$ given in Fig.~\ref{fig:fig1}. 
Let $n\in{\mathbb{N}}$ be the length of $s$. One of two cases must hold:
\begin{itemize}
    \item[] [Case 1: $s$ starts with $\triangle$.]
We 
can rewrite $s$ as $\triangle{s^{2:n}}$. 
From Fig.\ref{fig:fig1}, we know that $M\not\models\triangle{{{N_{ac}}}}$, whereas $M\models\bigcirc{{{N_{ac}}}}$ and therefore $s\neq \triangle$. 
Note that $M$ is such that any number of successive triangles reduce to one: for any $\varphi$, $M\models\triangle\varphi$ iff $M\models\triangle{s'}\varphi$ for every ${s'}\in{S_{\{\triangle\}}}$. Since $M\not\models\triangle{{{N_{ac}}}}$, then $M\not\models\triangle{s'}{N_{ac}}$, for every ${s'}\in{S_{\{\triangle\}}}$. Thus,  $s\not\in{S_{\{\triangle\}}}$.
The sequence $s$ must therefore contain at least one $\square$, and can be rewritten as ${s^{1:m}\square{s^{(m+2):n}}}$ where $s^{1:m}\in{S_{\{\triangle\}}}$, with $1\leq{m}<n$. 
Given that, as mentioned above, all triangles can be reduced to one, for all $\varphi\in{\mathcal{L}}$, $M\models{{s^{1:m}\square{s^{(m+2):n}}}\varphi}$ iff $M\models{{\triangle\square{s^{(m+2):n}}}\varphi}$. 
As illustrated in Figure \ref{fig:fig1}, $M\not\models{{\triangle\square}N_{ac}}$. Furthermore, note that $M\models{{\triangle\square}\varphi}$ iff  $M\models{{\triangle\square}s'\varphi}$ for all $s'\in S_{\{\triangle, \square\}}$, i.e., $M_{\triangle\square}$ is stable. 
 Hence, $M\not\models{{\triangle\square}s'{N_{ac}}}$ for all $s'\in S_{\{\triangle, \square\}}$. Thus, in particular: $M\not\models{{\triangle\square}s^{(m+2):n}{N_{ac}}}$. Hence, using again the fact that the number of initial triangles is irrelevant, $M\not\models{s^{1:m}\square{s^{(m+2):n}}}{N_{ac}}$ which is $M\not\models{s}{N_{ac}}$. But $M\models{\bigcirc}{N_{ac}}$, and hence $s$ is not equivalent to $\bigcirc$ in $M$.  
\item[] [Case 2: $s$ starts with $\square$.]
$s$ is of the kind ${\square{s^{2:n}}}$. As illustrated in Fig.\ref{fig:fig1}, $M\not\models\square{{{g_a}}}$, whereas $M\models\bigcirc{{{g_{a}}}}$.
Note that $M$ is such that $M\models{{\square}\varphi}$ iff  $M\models{{\square}s'\varphi}$ for all $s'\in S_{\{\square,\triangle\}}$. Hence, $M\not\models{{\square}s'{g_a}}$ for all $s'\in S_{\{\square,\triangle\}}$. Thus, $M\not\models{{\square}s^{2:n}{g_a}}$ which is the same as $M\not\models{s{g_a}}$. Hence, $s$ is not equivalent to $\bigcirc$ in $M$.  

\end{itemize}

Hence, 
in both cases, $s$ is not equivalent to $\bigcirc$ in $M$. Contradiction. Thus, there is no sequence in $S_{\{\triangle, \square\}}$ equivalent to $\bigcirc$, which implies that $S_{\{\bigcirc\}}$ is not replaceable by $S_{\{\triangle, \square\}}$.



\end{proof}

From the proof of 
Theorem~\ref{theorem:genericreplaceability}, we know that
there is no sequence in 
 $S_{\{\square,\triangle\}}$ that is equivalent to   
 $\bigcirc$ 
 in the class of \emph{all} models. 
However, we will show in Proposition~ \ref{prop:equivalencemonotonic} that there are classes of models on which 
$\bigcirc$ 
does have 
equivalent sequences in $S_{\{\triangle, \square \}}$.
We first need to introduce the following additional abbreviations, where we already name $\psi_s$ the formula that captures the conditions under which $\bigcirc$ is equivalent to $s$.

\begin{definition}[Abbreviations $\psi_\triangle$, $\psi_\square$, $\psi_{\triangle\square}$ and $\psi_{\square\triangle^n}$ ]
\
\begin{itemize}
    \item $\psi_\triangle:=\bigwedge_{a,b\in \mathcal{A}}(N_{ab}\lor\neg sim^\omega_{ab})$
    \item $\psi_\square:=\bigwedge_{a\in \mathcal{A}}\bigwedge_{f\in F}(f_a\lor\neg f^\tau_{N(a)})$
    \item $\psi_{\triangle\square}:=\bigwedge_{{a,b}\in \mathcal{A}}(\neg{N_{ab}}\rightarrow({sim^{\omega}_{ab}}\leftrightarrow\triangle{{sim^\omega_{ab}})}) $
    \item For all $n>0$:  $\psi_{\square\triangle^n}:=\bigwedge_{a\in \mathcal{A}}\bigwedge_{f\in{F}}(\neg{f_a}\rightarrow({f^\tau_{N(a)}}\leftrightarrow\bigvee_{0\leq{i}\leq{n-1}}{\square\triangle^{i}{f^\tau_{N(a)}})})$
\end{itemize}
\label{def:psiabbreviations}
\end{definition}

We can now show that these formulas indeed define four classes of models in which $\bigcirc$ has an equivalent sequence in $S_{\{\triangle\square\}}$:
\begin{proposition}
Let $M$ 
be a model. $\bigcirc$ is equivalent on $M$ to: 
\begin{multicols}{2}
    
\begin{itemize}
    \item $\triangle$ iff $M\models\psi_\triangle$
    \item $\square$ iff $M\models\psi_\square$
    \item $\triangle\square $ iff $M\models\psi_{\triangle\square}$
\item 
$\square\triangle^{n}$ iff $M\models\psi_{\square\triangle^n}$, for $n>0$
\end{itemize}
\end{multicols}
\label{prop:equivalencemonotonic}
\end{proposition}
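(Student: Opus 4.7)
The plan is to verify each of the four biconditionals by directly comparing, as tuples, the model $M_\bigcirc$ with the model $M_s$ reached by the candidate sequence $s$, and reading off the precise condition on $M$ under which the two coincide. By Observation~\ref{obs:first}, formula-equivalence of models collapses to literal equality of tuples, so in each case it suffices to identify when $\mathcal{N}_\bigcirc = \mathcal{N}_s$ and when $\mathcal{V}_\bigcirc = \mathcal{V}_s$. Throughout, I will use that $\bigcirc$ combines the valuation update of $\triangle$ with the network update of $\square$ (Definition~\ref{def:diachronic}), that the $\square$-step preserves valuations, and that the $\triangle$-step preserves the network.

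For the single-operator cases, Definition~\ref{def:diachronic} immediately gives $\mathcal{V}_\bigcirc = \mathcal{V}_\triangle$ and $\mathcal{N}_\bigcirc = \mathcal{N}_\square$, so it remains only to check when $\bigcirc$ leaves the network (resp.\ the valuation) unchanged. By Definition~\ref{def:modelupdate}, the network is unchanged iff every pair $(a,b) \notin \mathcal{N}$ fails $sim^\omega_{ab}$, which is exactly $\psi_\triangle$; by Definition~\ref{def:mdiffusion}, the valuation is unchanged iff no agent $a$ without $f$ satisfies $f^\tau_{N(a)}$, which is exactly $\psi_\square$.

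For $s = \triangle\square$, the preservation properties yield $\mathcal{V}_{\triangle\square} = \mathcal{V}_\triangle = \mathcal{V}_\bigcirc$ automatically; so the only question is whether the networks agree. Now $\mathcal{N}_\bigcirc$ is obtained from $\mathcal{N}$ by adding the pairs $(a,b) \notin \mathcal{N}$ witnessing $sim^\omega_{ab}$ in $M$, while $\mathcal{N}_{\triangle\square}$ is obtained by adding those witnessing $sim^\omega_{ab}$ in $M_\triangle$, i.e., witnessing $\triangle sim^\omega_{ab}$ in $M$. These two augmentations of $\mathcal{N}$ coincide exactly when $\neg N_{ab} \to (sim^\omega_{ab} \leftrightarrow \triangle sim^\omega_{ab})$ holds for every pair, which is $\psi_{\triangle\square}$.

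For $s = \square\triangle^n$, dually $\mathcal{N}_{\square\triangle^n} = \mathcal{N}_\square = \mathcal{N}_\bigcirc$ is automatic. For the valuations, I would first establish by induction on $k \le n$, using monotonicity of diffusion (features once acquired are kept), that $f \in \mathcal{V}_{\square\triangle^k}(a)$ iff $f \in \mathcal{V}(a)$ or $f^\tau_{N(a)}$ holds in $M_{\square\triangle^i}$ for some $0 \le i \le k-1$. Instantiating at $k = n$ and translating back into object-language, $a$ acquires $f$ in $M_{\square\triangle^n}$ iff $M \models \bigvee_{0 \le i \le n-1} \square\triangle^i f^\tau_{N(a)}$, while in $M_\bigcirc$ it does iff $M \models f^\tau_{N(a)}$. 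Matching these conditions for all $a, f$ with $\neg f_a$ is precisely $\psi_{\square\triangle^n}$. The main technical point is this inductive step: after the initial $\square$ the network is frozen at $\mathcal{N}_\square$, so the only source of newly adopted features across the $n$ diffusion rounds is stepwise satisfaction of $f^\tau_{N(a)}$ in some intermediate $M_{\square\triangle^i}$, which is what the disjunction in $\psi_{\square\triangle^n}$ captures exactly.
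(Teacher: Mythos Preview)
Your proposal is correct and follows essentially the same approach as the paper's proof: both use Observation~\ref{obs:first} to reduce equivalence of $M_\bigcirc$ and $M_s$ to agreement on atoms, observe that one component ($\mathcal{V}$ or $\mathcal{N}$) automatically matches by Definition~\ref{def:diachronic}, and then verify that the corresponding $\psi$-formula is exactly the condition for agreement on the remaining component. The paper argues each direction separately by contradiction using the reduction axioms, whereas you argue the biconditionals directly at the semantic level (and make the inductive unfolding of $\square\triangle^n$ explicit); the underlying logic is identical.
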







For space reasons,  
the proof of Proposition~\ref{prop:equivalencemonotonic} is provided in the \hyperref[appendix]{Appendix}. 

We can now show that if $\bigcirc$ has an equivalent sequence in $S_{\{\triangle\square\}}$ on some model, then that sequence has to be equivalent to one of those in Proposition~\ref{prop:equivalencemonotonic} on that model.
To prove this,  
we need the following lemmas.

\begin{lemma}
     Let 
     $M$
     be a model and $s\in{S_D}$. 
     If $s$ starts with a subsequence of the 
     form $\triangle^n$ for some $n>0$
     and $s$ is equivalent to $\bigcirc$ on $M$, then 
 $\triangle^n$ 
  is equivalent to $\triangle$ on $M$.
    \label{lemma:sublemma}
\end{lemma}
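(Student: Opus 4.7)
The plan is to exploit two structural facts: (i) all three operators $\triangle,\square,\bigcirc$ are \emph{monotonic}, i.e., they can only add features and/or add edges, never remove them; and (ii) the feature-update produced by $\bigcirc$ is by definition identical to the feature-update produced by $\triangle$ (this is directly visible in the reduction axioms $\bigcirc f_a \leftrightarrow f_a \lor f^{\tau}_{N(a)}$ and $\triangle f_a \leftrightarrow f_a \lor f^{\tau}_{N(a)}$, or equivalently from Definitions~\ref{def:mdiffusion} and~\ref{def:diachronic}).

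First, I would write $s = \triangle^n s'$, where $s'$ is the (possibly empty) tail of $s$, and record that, by assumption, $M_s = M_{\bigcirc}$ (using Observation~\ref{obs:first}). The goal reduces, via Observation~\ref{obs:first}, to showing $M_{\triangle^n} = M_{\triangle}$. Since neither $\triangle$ nor $\triangle^n$ alters the network relation, the equality of the network components of $M_{\triangle^n}$ and $M_{\triangle}$ is immediate; the entire argument therefore reduces to showing that the two valuations coincide, i.e., $\mathcal{V}_{\triangle^n}(a) = \mathcal{V}_{\triangle}(a)$ for every agent $a$.

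Second, I would establish two inclusions. For $\mathcal{V}_{\triangle}(a) \subseteq \mathcal{V}_{\triangle^n}(a)$, I just invoke monotonicity of the diffusion update: iterating $\triangle$ can only enlarge each agent's feature set, so $\mathcal{V}_{\triangle}(a) \subseteq \mathcal{V}_{\triangle^n}(a)$. For the reverse inclusion, I chain three observations:
\begin{itemize}
\item Monotonicity of every operator in $D$ gives $\mathcal{V}_{\triangle^n}(a) \subseteq \mathcal{V}_{\triangle^n s'}(a) = \mathcal{V}_{s}(a)$, since applying $s'$ after $\triangle^n$ can only add features.
\item By assumption $M_s = M_{\bigcirc}$, so $\mathcal{V}_{s}(a) = \mathcal{V}_{\bigcirc}(a)$.
\item By the identical definition of the feature component in $\bigcirc$ and $\triangle$, $\mathcal{V}_{\bigcirc}(a) = \mathcal{V}_{\triangle}(a)$.
\end{itemize}
Concatenating these yields $\mathcal{V}_{\triangle^n}(a) \subseteq \mathcal{V}_{\triangle}(a)$, which combined with the previous inclusion gives $\mathcal{V}_{\triangle^n}(a) = \mathcal{V}_{\triangle}(a)$, and thus $M_{\triangle^n} = M_{\triangle}$.

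I do not expect a genuine obstacle here: the argument is essentially a two-line sandwich once the right monotonicity observations are stated. The only point that requires a little care is spelling out why $\mathcal{V}_{\bigcirc} = \mathcal{V}_{\triangle}$ (one should verify from Definition~\ref{def:diachronic} that the new valuation in $M_{\bigcirc}$ is defined exactly as in Definition~\ref{def:mdiffusion}, using the \emph{original} network $\mathcal{N}$ rather than the updated $\mathcal{N}'$), and why the tail $s'$ cannot shrink any agent's feature set, which follows from the fact that each of $\triangle,\square,\bigcirc$ preserves already-adopted features by Definitions~\ref{def:mdiffusion}, \ref{def:modelupdate}, and~\ref{def:diachronic}.
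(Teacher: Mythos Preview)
Your proposal is correct and follows essentially the same approach as the paper: both arguments rest on (i) $\triangle$ leaving the network unchanged, (ii) monotonicity of all three updates on features, and (iii) the fact that $\mathcal{V}_{\bigcirc}=\mathcal{V}_{\triangle}$, combined with the hypothesis $M_s=M_{\bigcirc}$. The only difference is cosmetic: the paper argues by contradiction (locating a specific atom $f_a$ on which $M_{\triangle}$ and $M_{\triangle^n}$ disagree and deriving $M\models s f_a$ but $M\not\models\bigcirc f_a$), whereas you phrase the same chain of facts as a direct sandwich $\mathcal{V}_{\triangle}\subseteq\mathcal{V}_{\triangle^n}\subseteq\mathcal{V}_{s}=\mathcal{V}_{\bigcirc}=\mathcal{V}_{\triangle}$.
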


\begin{proof}
     Consider 
     a sequence $s\in{S_D}$ such that $s$ 
     starts with a subsequence of the kind $\triangle^n$, for some $n>0$, and $s$ is equivalent to $\bigcirc$ on model ${M= \langle{\mathcal{N},\mathcal{V},\omega,\tau}\rangle}$. 
  Two cases: either 
  $n=1$, or 
 $n>1$. 
  If $n=1$, 
  the claim is trivially true since $\triangle$ is equivalent to itself.
     Assume now that 
     $n>1$. Assume also, towards
      a contradiction, that 
 $\triangle^n$ 
 is not equivalent to $\triangle$ on $M$. By Def.~\ref{def:equivalences}, it follows that $M_{\triangle}\neq M_{\triangle^n}$. By Obs.~\ref{obs:first}, it follows that $M_{\triangle}$ and $ M_{\triangle^n}$ must differ on whether they satisfy some atomic proposition.  
By Def.~\ref{def:mdiffusion} and Def.~\ref{def:modelupdate}, since features can never be abandoned, we know that, for all $a\in \mathcal{A}$ for all $f\in\mathcal{F}$, if $M\models\triangle f_a$, then $M\models\triangle s' f_a$ for all $s'\in{S_D}$; with a similar reasoning, from Def.~\ref{def:mdiffusion} and Def.~\ref{def:modelupdate}, since diffusion updates do not alter the network structure, we also know that for all $a,b\in\mathcal{A}$, $M\models\triangle N_{ab}$ iff $M\models\triangle^n N_{ab}$.  
These observations, together with the fact that $M_{\triangle^n}$ and the model $M_\triangle$ must differ on the satisfaction of some atomic proposition,
imply that there are $a\in\mathcal{A}$ and $f\in\mathcal{F}$ such that $M\models \triangle^n f_a$ and $M\not\models \triangle f_a$. 
 From Obs.~\ref{observation:inclusion}, we know that, for all $a\in \mathcal{A}$ for all $f\in\mathcal{F}$, $M\models\triangle f_a$ iff $M\models\bigcirc{f_a}$. Since $M\not\models\triangle f_a$, we can infer that $M\not\models \bigcirc f_a$. 
At the same time, from the fact that features can never be abandoned, and the fact that $M\models\triangle^n f_a$, it follows that $M\models s f_a$, since $\triangle^n$ 
is the initial subsequence of $s$. 
Therefore, 
it must be the case that 
both $M\not\models\bigcirc f_a$ and that $M\models s f_a$. This contradicts the initial assumption that $s$ and $\bigcirc$ are equivalent on $M$. 
Therefore, for all $n\geq 1$, 
 $\triangle^n$ 
 must be equivalent to $\triangle$ on $M$.
\end{proof}

\begin{lemma}
    Let 
    $M$ 
    be a a model 
and $s\in{S_{\{\triangle,\square\}} \setminus(S_{\{\triangle\}}\cup{S_{\{\square\}}})}$. 
If $s$ starts with $\triangle$ and is equivalent to $\bigcirc$ on $M$, then $s$ is equivalent to $\triangle\square$ on $M$.
    \label{lemma:starttriangle}
\end{lemma}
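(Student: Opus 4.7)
The plan is to reduce $s$ to the form $\triangle\square s'$ and then show that the suffix $s'$ acts as the identity on $M_{\triangle\square}$. Since $s$ starts with $\triangle$ and contains at least one $\square$, I would write $s = \triangle^n \square^m s'$ with $n, m \geq 1$ and $s' \in S_{\{\triangle,\square\}}$ (possibly empty). Applying Lemma~\ref{lemma:sublemma} to the initial $\triangle^n$ and Observation~\ref{lemma:bound} to the following $\square^m$ (on the model $M_\triangle$) collapses these prefixes to $\triangle$ and $\square$ respectively, so $M_s = (M_{\triangle\square})_{s'}$. It therefore suffices to show $(M_{\triangle\square})_{s'} = M_{\triangle\square}$.

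The two key observations are: (i) $\mathcal{V}_{\triangle\square} = \mathcal{V}_\triangle = \mathcal{V}_\bigcirc$, because network updates preserve features (Definition~\ref{def:modelupdate}) and $\bigcirc$'s feature update is defined to coincide with $\triangle$'s (Definition~\ref{def:diachronic}); and (ii) $\mathcal{N}_{\triangle\square}$ already contains every pair $(a,b)$ for which $sim^\omega_{ab}$ holds in $\mathcal{V}_\triangle$. From (i) together with the assumption $M_s = M_\bigcirc$, both endpoints of the valuation chain along $s'$ equal $\mathcal{V}_\bigcirc$; since feature updates are monotonic (features are never unadopted), every intermediate valuation during $s'$ must also equal $\mathcal{V}_\bigcirc$. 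With features constant at $\mathcal{V}_\bigcirc = \mathcal{V}_\triangle$, each $\square$ in $s'$ attempts to add exactly the similarity-based links already supplied by the initial $\square$, namely those in $\mathcal{N}_{\triangle\square}$, and is thus a no-op on the network; each $\triangle$ in $s'$ leaves the network alone by Definition~\ref{def:mdiffusion}. Hence neither the feature component nor the network component changes along $s'$, so $(M_{\triangle\square})_{s'} = M_{\triangle\square}$, establishing the claim.

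The main obstacle is the possible interference between the two update types along $s'$: a priori, a $\triangle$ late in $s'$ could exploit new links added by a preceding $\square$ to diffuse additional features, which could in turn unlock further similarity-based links via a later $\square$. What shuts this cascade down is the combination of monotonicity of both updates with the assumption that $s$ is equivalent to $\bigcirc$ on $M$, which pins both ends of the feature chain to $\mathcal{V}_\bigcirc$, together with the identity $\mathcal{V}_\triangle = \mathcal{V}_\bigcirc$, which makes the similarity-based link set computed by the initial $\square$ (in $M_\triangle$) coincide with the one any later $\square$ in $s'$ would compute.
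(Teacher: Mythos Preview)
Your proof is correct. The route differs from the paper's mainly in organisation. The paper argues by contradiction: it assumes $M_s\neq M_{\triangle\square}$, observes that the two models agree on all feature atoms (since $\mathcal{V}_{\triangle\square}=\mathcal{V}_\bigcirc=\mathcal{V}_s$), locates a disagreeing network atom $N_{ab}$, rules out the monotone direction, and in the remaining case traces the creation of the extra link $(a,b)$ back through $s$ to a feature that was acquired strictly after the first $\triangle$, contradicting $M\models\triangle f_a \leftrightarrow \bigcirc f_a$. Your argument is direct: after collapsing $s$ to $\triangle\square s'$, you pin both endpoints of the valuation chain along $s'$ to $\mathcal{V}_\bigcirc$ and use monotonicity of features as a sandwich to freeze the valuation throughout $s'$; with the valuation frozen at $\mathcal{V}_\triangle$, every later $\square$ can only add the similarity links already contributed by the initial $\square$, and every $\triangle$ leaves the network alone, so $s'$ is the identity on $M_{\triangle\square}$. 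The underlying insight is the same---no feature can change after the first $\triangle$ without violating $s\equiv\bigcirc$---but your sandwich packaging avoids the paper's case split on atoms and its step-by-step trace-back, and it yields the slightly stronger fact that $M_{\triangle\square}$ is already a fixed point for every operator appearing in $s'$.
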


\begin{proof}
     Consider any  $s\in{S_{\{\triangle,\square\}}\setminus(S_{\{\triangle\}}}\cup{S_{\{\square\}}})$, such that $s$ starts with $\triangle$ and is equivalent to $\bigcirc$ on some model 
     ${M= \langle{\mathcal{N},\mathcal{V},\omega,\tau}\rangle}$.
     Assume, towards contradiction, that $s $  is not equivalent to $\triangle\square$ on $M$. 
    By Lemma \ref{lemma:sublemma}, we know that if $s$ starts with a sequence of the kind $\triangle^n$, then 
 $\triangle^n$ must be equivalent to $\triangle$ on $M$. For this reason, we can restrict ourselves to consider the case in which $s$ starts with the subsequence $\triangle\square$.  
Since $s$ is not equivalent to $\triangle\square$ on $M$, by Obs.\ref{obs:first}, it follows that 
 $M_{s}$ and $M_{\triangle\square}$ must differ on whether they satisfy some atomic proposition.
From Obs.~\ref{observation:inclusion}, we know that, for all $a\in \mathcal{A}$ for all $f\in\mathcal{F}$, $M\models\triangle f_a$ iff $M\models\bigcirc{f_a}$.
From this and the fact that the network update does not affect the features of the agent, we know that $M\models\triangle\square f_a$ iff $M\models\bigcirc{f_a}$, for all $a\in \mathcal{A}$ for all $f\in\mathcal{F}$. 
This, combined with the fact that $s$ is equivalent to $\bigcirc$, implies that $M\models\bigcirc{f_a}$ iff $M\models sf_a$, and hence, that $M\models\triangle\square f_a$ iff $M\models sf_a$, for all $a\in \mathcal{A}$ and $f\in\mathcal{F}$.
Since by Obs.\ref{obs:first}, we know that $M_{s}$ and $M_{\triangle\square}$ must differ on whether they satisfy some atomic proposition,
it follows that there are $a,b$ such 
that either: (i) $M\not\models s{N_{ab}}$ and $M\models \triangle\square{N_{ab}}$ or (ii) $M\models s{N_{ab}}$ and $M\not\models \triangle\square{N_{ab}}$. 
Assume that (i) is the case. If $M\models \triangle\square{N_{ab}}$, then from the Def.~\ref{def:mdiffusion} and Def.~\ref{def:modelupdate}, it follows that $M\models\triangle\square s'{N_{ab}}$ for all $s'\in S_{D}$, since links cannot be deleted. This fact, together with the fact that $\square\triangle$ is a subsequence of $s$, implies in particular that $M\models sN_{ab}$. This contradicts the fact that $M\not\models s{N_{ab}}$. 
Therefore (ii) 
must be the case, 
 i.e. $M\models s{N_{ab}}$ and $M\not\models \triangle\square{N_{ab}}$.
By the assumption that $s$ is equivalent to $\bigcirc$ it follows that $M\models{\bigcirc N_{ab}}$. 
Now, by the reduction axioms and the fact that $M\not\models \triangle\square{N_{ab}}$, we know that $M\not\models N_{ab}$ and $M\not\models\triangle{sim^\omega_{ab}}$. 
Since $M\not\models\triangle{sim^\omega_{ab}}$ but $M\models{s N_{ab}}$, there must $m\in\mathbb{N}$ smaller than the length $n$ of the sequence $s$, such that $M\models s^{1:m} sim^\omega_{ab}$: in other words, it must be the case that at some point of the sequence $s$, the agents $a,b$ have become similar. The fact that $M\models s^{1:m} sim^\omega_{ab}$ holds 
implies that there exist at least one $f\in{\mathcal{F}}$ and a $1<j\leq{m}$ such that either $M\models{s^{1:j}f_a}$ and $M\not\models{s^{1:i}f_a}$ for all $i<{j}$, or $M\models{s^{1:j}f_b}$ and $M\not\models{s^{1:i}f_b}$ for all $i<{j}$: this simply means that in order to become similar, at least one among $a$ or $b$ must have acquired at least one new feature that makes them similar at some point in the update sequence expressed by $s$. 
W.l.o.g. consider the case in which it is $a$ that has acquired a new feature, i.e. that there exist $f\in{\mathcal{F}}$ and a $1<j\leq{m}$ such that  $M\models{s^{1:j}f_a}$ and $M\not\models{s^{1:i}f_a}$ for all $i<{j}$. From the fact that features are never abandoned, $M\models sf_a$. 
Since $M\models sf_a$, and, by assumption $s$ is equivalent to $\bigcirc$ on $M$, it follows that $M\models\bigcirc{f_a}$. 
Since, by assumption $s$ starts with $\triangle$, and it is the case that $M\not\models{s^{1:i}f_a}$ for all $i<{j}$, we know that $M\not\models\triangle{f_a}$ (triangle is the first operator in $s$). 
It follows that both $M\models\bigcirc{f_a}$ and $M\not\models\triangle{f_a}$ are true. By Obs.~\ref{observation:inclusion},
we know that for all $a\in \mathcal{A}$ for all $f\in\mathcal{F}$, $M\models\bigcirc{f_a}$ iff $M\models\triangle{f_a}$. Contradiction.
Therefore, there is no sequence $s\in{S_{\{\triangle,\square\}}\setminus(S_{\{\triangle\}}}\cup{S_{\{\square\}}})$, such that $s$ starts with $\triangle$, is equivalent to $\bigcirc$ on $M$, and is not equivalent to $\triangle\square$ on $M$.

\end{proof}


\begin{lemma}
   Let $M$ be a model 
   and $s\in{S_{\{\triangle,\square\}}\setminus(S_{\{\triangle\}}}\cup S_{\{\square\}})$. If $s$ starts with $\square$ and is equivalent to $\bigcirc$ on $M$, then $s$ is equivalent to a sequence in the set $\{\square\triangle^n : n>0\}$.
    \label{lemma:startsquare}
\end{lemma}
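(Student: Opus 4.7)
The plan is to leverage monotonicity — features can only be added, and links can only be added — together with the assumption that $s$ is equivalent to $\bigcirc$ on $M$, to show that every $\square$ in $s$ past the first is a no-op, so that $s$ collapses on $M$ to $\square\triangle^n$ where $n$ is the number of $\triangle$'s in $s$.

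First I would observe that, since $s$ is equivalent to $\bigcirc$ on $M$, Observation \ref{obs:first} gives $M_s = M_\bigcirc$; in particular, the network of $M_s$ equals the network $\mathcal{N}_\bigcirc$ of $M_\bigcirc$. Since Definitions \ref{def:modelupdate} and \ref{def:diachronic} share the same network-update clause, applying the first operation of $s$ — which is $\square$ — already produces the network $\mathcal{N}_\bigcirc$.

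Next, write $\mathcal{N}_i$ for the network of $M_{s^{1:i}}$ and $n$ for the length of $s$. By Definitions \ref{def:mdiffusion} and \ref{def:modelupdate}, every step is non-decreasing on the network, so $\mathcal{N} \subseteq \mathcal{N}_1 \subseteq \cdots \subseteq \mathcal{N}_n$. Combining this chain with $\mathcal{N}_1 = \mathcal{N}_\bigcirc = \mathcal{N}_n$, squeezing forces $\mathcal{N}_i = \mathcal{N}_\bigcirc$ for every $1 \leq i \leq n$. Therefore, for any $i > 1$ whose $i$-th operation is $\square$, that step changes neither the network (already $\mathcal{N}_\bigcirc$ before and after) nor the valuation (since $\square$ leaves features untouched), so $M_{s^{1:i-1}} = M_{s^{1:i}}$ and the $i$-th operation can be dropped from $s$ without affecting $M_s$.

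Performing this deletion for every non-initial $\square$ in $s$ produces the sequence $\square\triangle^n$, where $n$ is the number of $\triangle$-occurrences in $s$; because $s \notin S_{\{\square\}}$ we have $n \geq 1$, so this sequence lies in $\{\square\triangle^m : m > 0\}$ and is equivalent to $s$ on $M$. The only real obstacle is the monotonicity/squeezing bookkeeping; there is no involved case analysis of the kind used in Lemma \ref{lemma:starttriangle}, because pinning the final network at $\mathcal{N}_\bigcirc$ already pins every intermediate network and thereby neutralizes the intermediate $\square$'s directly.
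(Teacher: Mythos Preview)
Your argument is correct and rests on the same monotonicity insight as the paper's proof, but you package it more directly. The paper argues by contradiction: if $s$ were not equivalent to any $\square\triangle^n$, then some non-initial $\square$ in $s$ must create a genuinely new edge $N_{ab}$ not present at any earlier prefix; by monotonicity that edge survives to $M_s=M_\bigcirc$, hence (via Observation~\ref{observation:inclusion}) it is already present after the very first $\square$, contradicting its novelty. You instead observe globally that the first $\square$ already realises the network $\mathcal{N}_\bigcirc$, the final network is also $\mathcal{N}_\bigcirc$ by equivalence, and monotone squeezing pins every intermediate network to $\mathcal{N}_\bigcirc$; every later $\square$ is therefore a no-op and can be deleted, collapsing $s$ to $\square\triangle^m$. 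Your route is cleaner and avoids the edge-by-edge witness bookkeeping of the paper's contradiction; the paper's version, on the other hand, makes more explicit which atomic formula would distinguish $M_s$ from $M_\bigcirc$, in line with the style of Lemma~\ref{lemma:starttriangle}. One small point worth tightening in a full write-up: when you say ``performing this deletion for every non-initial $\square$'', you are implicitly using that each deletion leaves all the prefix models $M_{s^{1:i}}$ unchanged (merely re-indexed), so the squeezing conclusion continues to apply; this is immediate, but stating it as a short induction on the number of $\triangle$'s encountered would make the collapse to $\square\triangle^m$ fully formal.
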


\begin{proof}
    Let a model ${M= \langle{\mathcal{N},\mathcal{V},\omega,\tau}\rangle}$ be given. Assume that $s\in{S_{\{\triangle,\square\}}\setminus(S_{\{\triangle\}}}\cup{S_{\{\square\}}})$ starts with $\square$ and is equivalent to $\bigcirc$ on $M$, and that $s$ is not equivalent to any sequence in the set $\{\square\triangle^n : n>0\}$ on $M$.
    From the fact that any sequence $s'\in{S_{\{\square\}}}$ is equivalent to the sequence $\square$, it follows that, if $s$ starts with a subsequence of the kind $\square^n$ before the first occurrence of a $\triangle$, then $s$ is equivalent on $M$ to a sequence that starts with a single $\square$ followed by the subsequence of $s$ starting at the first occurrence of a $\triangle$ and ending with the last operator of $s$. In other words, it is sufficient to consider the case in which $s$ starts with a subsequence of the kind $\square\triangle$.
From this, and the fact that $s\in{S_{\{\triangle,\square\}}\setminus(S_{\{\triangle\}}}\cup{S_{\{\square\}}})$ and $s\not\in\{\square\triangle^n : n>0\}$, $s$ must be such that at some point of the subsequence of $s$ starting with the third operator of $s$, at least another $\square$ occurs in it. 
Furthermore, at least one such $\square$, must be such that $s$ is not equivalent on $M$ to the sequence $s$ without that $\square$. Otherwise, $s$ would be equivalent to a sequence with no further elements of the kind $\square$ after the initial subsequence $\square\triangle$, and hence would be a sequence in the set $\{\square\triangle^n: n>0\}$.
From this, it follows that there must be a subsequence $s^{1:m}$ of $s$, with $m\leq{n}$, with $n$ the length of the sequence $s$, where the $m$-th element is a $\square$, such that for some $a,b\in\mathcal{A}$, $M\models{s^{1:m}N_{ab}}$, and for all $j<{m}$, $M\not\models{s^{1:j}N_{ab}}$.
 Observe that $\square$ is a subsequence of $s^{1:m}$. Therefore, $M\not\models\square{N_{ab}}$. 
 By the fact that $M\models{s^{1:m}N_{ab}}$, and since ${s^{1:m}}$ is a subsequence of $s$, and connections between agents cannot be abandoned by Def.~\ref{def:modelupdate}, it follows that $M\models{s N_{ab}}$.
 By the assumption that $s$ and $\bigcirc$ are equivalent, it follows that $M\models{s N_{ab}}$.
By Obs.~\ref{observation:inclusion}, 
we know that for all $a,b\in \mathcal{A}$, $M\models\bigcirc N_{ab}$ iff $M\models\square N_{ab}$. This contradicts the previous claim that $M\not\models\square{N_{ab}}$.
Therefore there is no sequence $s\in{S_{\{\triangle,\square\}}\setminus(S_{\{\triangle\}}}\cup{S_{\{\square\}}})$ that starts with $\square$, is equivalent to $\bigcirc$ on $M$, and is not equivalent to any sequence in the set $\{\square\triangle^n : n>0\}$ on $M$.
\end{proof}

We can now combine the above lemmas to prove the following theorem.

\begin{theorem}
Let $M$ be a model 
and $s \in 
 S_{\{\square,\triangle\}}$.
If $s$ is equivalent to $\bigcirc$ on $M$, then $s$ is equivalent to a sequence in the set $\{\square, \triangle, \triangle\square\}\cup\{\square\triangle^n : n>0\}$ on $M$.
\label{theorem:necessary}
\end{theorem}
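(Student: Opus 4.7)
The plan is a straightforward case analysis on the form of $s$, invoking the three preceding lemmas to handle the non-trivial cases and Observation~\ref{lemma:bound} to handle the purely $\square$-sequence case. Since $s \in S_{\{\square,\triangle\}}$, exactly one of the following holds: (i) $s \in S_{\{\square\}}$, (ii) $s \in S_{\{\triangle\}}$, or (iii) $s \in S_{\{\square,\triangle\}} \setminus (S_{\{\square\}} \cup S_{\{\triangle\}})$, and case (iii) splits further according to whether $s$ starts with $\triangle$ or with $\square$.

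In case (i), Observation~\ref{lemma:bound} immediately gives that $s$ is equivalent to $\square$ on $M$, so $s$ is equivalent to a member of the target set. In case (ii), $s = \triangle^n$ for some $n \geq 1$; since $s$ itself trivially starts with the subsequence $\triangle^n$ and is equivalent to $\bigcirc$ on $M$ by hypothesis, Lemma~\ref{lemma:sublemma} applies and tells us that $\triangle^n$ is equivalent to $\triangle$ on $M$, again hitting the target set.

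For case (iii), I split on the first operator of $s$. If $s$ starts with $\triangle$, then $s \in S_{\{\square,\triangle\}} \setminus (S_{\{\triangle\}} \cup S_{\{\square\}})$ starts with $\triangle$ and is equivalent to $\bigcirc$ on $M$, so Lemma~\ref{lemma:starttriangle} gives directly that $s$ is equivalent to $\triangle\square$ on $M$. If instead $s$ starts with $\square$, then the hypotheses of Lemma~\ref{lemma:startsquare} are met verbatim, so $s$ is equivalent to some sequence in $\{\square\triangle^n : n > 0\}$ on $M$.

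Combining the four sub-cases, any $s$ equivalent to $\bigcirc$ on $M$ is equivalent on $M$ to a sequence in $\{\square, \triangle, \triangle\square\} \cup \{\square\triangle^n : n > 0\}$, which is what the theorem asserts. There is no real obstacle here since the lemmas already do the work; the only thing to be careful about is ensuring the case split is exhaustive and that the hypothesis of each lemma is genuinely met before invoking it (in particular, that in case (ii) we do not need the ``$s$ contains both operators'' requirement of Lemmas~\ref{lemma:starttriangle} and~\ref{lemma:startsquare}, which is why the separate appeal to Lemma~\ref{lemma:sublemma} is needed for pure $\triangle$-sequences).
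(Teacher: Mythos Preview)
Your proof is correct and follows essentially the same route as the paper's own proof: the same three-way case split on $s$ (pure $\square$, pure $\triangle$, mixed) with the mixed case further split by the first operator, and the same appeals to Observation~\ref{lemma:bound}, Lemma~\ref{lemma:sublemma}, Lemma~\ref{lemma:starttriangle}, and Lemma~\ref{lemma:startsquare} in the respective cases. The only cosmetic difference is that the paper cites idempotence of the network update directly for case~(i) whereas you invoke Observation~\ref{lemma:bound}, which records the same fact.
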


\begin{proof}
Consider an arbitrary model ${M= \langle{\mathcal{N},\mathcal{V},\omega,\tau}\rangle}$, and an arbitrary sequence in $s\in{S_{\{\square,\triangle\}}}$ equivalent to $\bigcirc$ on $M$.
One of three cases must hold: $s\in{S_{\{\square\}}}$, $s\in{S_{\{\triangle\}}}$, $s\in{S_{\{\triangle, \square\}}}\setminus ({S_{\{\square\}}}\cup {S_{\{\triangle\}}}) $.
[Case 1:  $s\in{S_{\{\square\}}}$.]
Since the model update in Def.~\ref{def:modelupdate} is idempotent, any sequence $s'\in{S_{\{\square\}}}$ is equivalent to the sequence $\square$. 
This implies that $s$ is equivalent to $\square$ on $M$.  
[Case 2: 
$s\in{S_{\{\triangle\}}}$.] 
Then, trivially, $s$ starts with a subsequence of the 
     form $\triangle^n$ for some $n>0$. 
From Lemma~\ref{lemma:sublemma}, we know that $s$ is equivalent to $\triangle$ on $M$.
[Case 3: 
$s\in{S_{\{\triangle, \square\}}}\setminus ({S_{\{\square\}}}\cup {S_{\{\triangle\}}}) $.] If $s$ starts with a $\triangle$, we know by Lemma~\ref{lemma:starttriangle} that $s$ is equivalent to $\triangle\square$ on $M$. If $s$ starts with a $\square$, by Lemma~\ref{lemma:startsquare}, we know that $s$ is equivalent on $M$ to a sequence in the set $\{\square, \triangle, \triangle\square\}\cup\{\square\triangle^n : n>0\}$.
From this, it follows that if $s$ is equivalent to $\bigcirc$ on some model $M$, then $s$ is equivalent on $M$ to a sequence in the set $\{\square, \triangle, \triangle\square\}\cup\{\square\triangle^n : n>0\}$.

\end{proof}

Using Observation~\ref{lemma:bound}, Proposition~\ref{prop:equivalencemonotonic} and Theorem~\ref{theorem:necessary}, we can now characterise the class of models on which $\bigcirc$ can be replaced by 
$S_{\{\triangle,\square\}}$.





\begin{theorem} 
$\bigcirc$ is replaceable by $S_{\{\triangle,\square\}}$ on a model $M$ iff $M\models\psi_\triangle\lor\psi_\square\lor\psi_{\triangle\square}\lor\bigvee_{0\leq{n}<{|\mathcal{A}}|}\psi_{\square\triangle^n}$.
\label{theorem:characterisation}
\end{theorem}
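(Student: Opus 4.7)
The plan is to prove the two directions separately, using Proposition~\ref{prop:equivalencemonotonic} for the sufficiency direction and combining Theorem~\ref{theorem:necessary} with Proposition~\ref{prop:equivalencemonotonic} (plus Observation~\ref{lemma:bound} to control the length of $\triangle$-sequences) for the necessity direction.

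For the right-to-left direction ($\Leftarrow$), I would assume $M\models\psi_\triangle\lor\psi_\square\lor\psi_{\triangle\square}\lor\bigvee_{0\leq n<|\mathcal{A}|}\psi_{\square\triangle^n}$ and split into cases according to which disjunct is satisfied. In each case, Proposition~\ref{prop:equivalencemonotonic} immediately provides an explicit sequence in $S_{\{\triangle,\square\}}$ (namely $\triangle$, $\square$, $\triangle\square$, or $\square\triangle^n$ respectively) equivalent to $\bigcirc$ on $M$, so $\bigcirc$ is replaceable by $S_{\{\triangle,\square\}}$ on $M$ by Definition~\ref{replaceability}.

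For the left-to-right direction ($\Rightarrow$), I would assume $\bigcirc$ is replaceable by $S_{\{\triangle,\square\}}$ on $M$, so by Definition~\ref{replaceability} there is some $s\in S_{\{\triangle,\square\}}$ equivalent to $\bigcirc$ on $M$. Theorem~\ref{theorem:necessary} then tells us that $s$ is equivalent on $M$ to a sequence in $\{\square,\triangle,\triangle\square\}\cup\{\square\triangle^n:n>0\}$. In the first three cases, Proposition~\ref{prop:equivalencemonotonic} yields $M\models\psi_\square$, $M\models\psi_\triangle$, or $M\models\psi_{\triangle\square}$ respectively. In the last case, I get $M\models\psi_{\square\triangle^n}$ for some $n>0$, but I still need to bound $n$ below $|\mathcal{A}|$ to match the statement of the theorem. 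Here I would invoke the second item of Observation~\ref{lemma:bound}: there exists $k<|\mathcal{A}|$ such that $\triangle^m$ is equivalent to $\triangle^k$ on $M_\square$ for all $m\geq k$; hence $\square\triangle^n$ is equivalent on $M$ to $\square\triangle^{\min(n,k)}$, and by Proposition~\ref{prop:equivalencemonotonic} applied to this shorter sequence, $M\models\psi_{\square\triangle^{n'}}$ for some $n'<|\mathcal{A}|$, which sits inside the required disjunction.

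The main obstacle is the bookkeeping for the $\square\triangle^n$ case: the necessary condition in Theorem~\ref{theorem:necessary} allows unboundedly long $\triangle$-tails, but the characterisation needs $n<|\mathcal{A}|$. The key observation, made precise by Observation~\ref{lemma:bound}, is that diffusion updates stabilise after fewer than $|\mathcal{A}|$ steps (regardless of the preceding $\square$), so any equivalent long sequence $\square\triangle^n$ collapses to some $\square\triangle^{n'}$ with $n'<|\mathcal{A}|$, which is precisely the range quantified over in the theorem. Once this reduction is observed, both directions follow essentially by pattern-matching against Proposition~\ref{prop:equivalencemonotonic}.
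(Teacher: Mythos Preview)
Your proposal is correct and follows essentially the same approach as the paper: both directions are handled by case analysis against Proposition~\ref{prop:equivalencemonotonic}, with Theorem~\ref{theorem:necessary} supplying the reduction to the four canonical forms in the $\Rightarrow$ direction and Observation~\ref{lemma:bound} (applied to $M_\square$) bounding the length of the $\triangle$-tail below $|\mathcal{A}|$. The paper's argument is organised identically, so there is nothing substantive to add.
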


\begin{proof}
   Consider an arbitrary model $M$. 
    
    [$\Rightarrow$] Assume that $\bigcirc$ is replaceable with $S_{\{\triangle,\square\}}$ on $M$:
 there exists a sequence $s\in S_{\{\triangle,\square\}}$ equivalent to $\bigcirc$ on $M$. By Theorem~\ref{theorem:necessary}, we know that $s$ is equivalent to a sequence $s'\in\{\square, \triangle, \triangle\square\}\cup\{\square\triangle^n : n>0\}$. We distinguish four cases: 
 (i) $s$ is equivalent to $\triangle$ on $M$, (ii) $s$ is equivalent to $\square$ on $M$; (iii) $s$ is equivalent to $\triangle\square$ on $M$; 
 (iv) $s$ is equivalent to a sequence in $\{\square\triangle^n : n>0\}$ on $M$.
    Assume that (i). 
    By the first point in Prop.~\ref{prop:equivalencemonotonic}, 
    we know that 
$M\models\psi_\triangle$. From this, it follows that $M\models{\psi_\triangle\lor\psi_\square\lor\psi_{\triangle\square}\lor\bigvee_{0\leq{n}<{|\mathcal{A}}|}\psi_{\square\triangle^n}}$ for any $n$.
    Assume that (ii) is the case; by the second point in Prop.~\ref{prop:equivalencemonotonic}, and the fact that $s$ is equivalent to $\square$ on $M$, we know that $s$ is equivalent to $\bigcirc$ on $M$ iff $M\models\psi_\square$. From this, it follows that $M\models{\psi_\triangle\lor\psi_\square\lor\psi_{\triangle\square}\lor\bigvee_{0\leq{n}<{|\mathcal{A}}|}\psi_{\square\triangle^n}}$ for any $n$.
    Assume that (iii) is the case; by the third point in Prop.~\ref{prop:equivalencemonotonic}, and the fact that $s$ is equivalent to $\triangle\square$ on $M$, we know that $s$ is equivalent to $\bigcirc$  on $M$ iff $M\models\psi_{\triangle\square}$. From this, it follows that $M\models{\psi_\triangle\lor\psi_\square\lor\psi_{\triangle\square}\lor\bigvee_{0\leq{n}<{|\mathcal{A}}|}\psi_{\square\triangle^n}}$ for any $n$.
    Assume that (iv) is the case, and assume that $s$ is equivalent to a sequence of the kind $\square\triangle^n$ on $M$, for arbitrary $n>0$. By Obs.~\ref{lemma:bound}, we know that for all sequences $\square\triangle^n$ with $n\geq|\mathcal{A}|$, there exist an equivalent sequence $\square\triangle^m$ on $M$ such that $m<|\mathcal{A}|$. We therefore consider the case in which $n<|\mathcal{A}|$: in this case, by the fourth point in Prop.~\ref{prop:equivalencemonotonic}, it then follows that $s$ is equivalent to $\bigcirc$ on $M$ iff $M\models\psi_{\square\triangle^n}$. From this, it follows that $M\models{\psi_\triangle\lor\psi_\square\lor\psi_{\triangle\square}\lor\bigvee_{0\leq{n}<{|\mathcal{A}}|}\psi_{\square\triangle^n}}$. Since $n>0$ was arbitrary, this holds for all $n>0$ in $\mathbb{N}$.

    [$\Leftarrow$] Assume that $M\models{\psi_\triangle\lor\psi_\square\lor\psi_{\triangle\square}\lor\bigvee_{0\leq{n}<{|\mathcal{A}}|}\psi_{\square\triangle^n}}$. Therefore, one of the following four cases must hold: (i) $M\models\psi_\triangle$; (ii) $M\models\psi_\square$; (iii) $M\models\psi_{\triangle\square}$; (iv) $M\models{\bigvee_{0\leq{n}<{|\mathcal{A}}|}\psi_{\square\triangle^n}}$.
    (i) Assume that $M\models\psi_\triangle$. By Prop.~\ref{prop:equivalencemonotonic}, we know that this holds iff $\triangle$ is equivalent to $\bigcirc$ on $M$. In this case, we take $s$ to be $\triangle$.
     (ii) Assume that $M\models\psi_\square$. By Prop.~\ref{prop:equivalencemonotonic}, we know that this holds iff $\square$ is equivalent to $\bigcirc$ on $M$. In this case, we take $s$ to be $\square$.
      (iii) Assume that $M\models\psi_{\triangle\square}$. By Prop.~\ref{prop:equivalencemonotonic}, we know that this holds iff $\triangle\square$ is equivalent to $\bigcirc$ on $M$. In this case, we take $s$ to be $\triangle\square$.
      (iv) Assume that $M\models{\bigvee_{0\leq{n}<{|\mathcal{A}}|}\psi_{\square\triangle^n}}$. Therefore there is an $n<|\mathcal{A}|$, such that $M\models{\psi_{\square\triangle^n}}$.By Prop.~\ref{prop:equivalencemonotonic}, we know that this holds iff $\square\triangle^n$ is equivalent to $\bigcirc$ on $M$. In this case, we take $s$ to be $\square\triangle^n$.

      Since in all cases (i)-(iv), we can find a sequence $s\in S_{\{\triangle,\square\}}$ equivalent to $\bigcirc$ on $M$, we have proven that, if $M\models{\psi_\triangle\lor\psi_\square\lor\psi_{\triangle\square}\lor\bigvee_{0\leq{n}<{|\mathcal{A}}|}\psi_{\square\triangle^n}}$, $\bigcirc$ is replaceable with $S_{\{\triangle,\square\}}$ on $M$.
    
\end{proof}

Informally, Theorem~\ref{theorem:characterisation} tells us that a sequence $s$ without synchronous operators can
replace a synchronous 
operator only under one of the following circumstances: no agent has social pressure to adopt new features ($s$ is equivalent to $\square$); no agent is similar to any disconnected agent ($s$ is equivalent to $\triangle$); conforming to social pressure preserves similarity with disconnected agents ($s$ is equivalent to $\triangle\square$); creating new connections with similar agents does not forbid conforming to old social pressures ($s$ is equivalent to a sequence in $\{\square\triangle^n : n>0\}$). 

\begin{proposition}\label{prop:scalingup}
Let $M$ be a model. If $M\models\bigwedge_{0\leq{i}\leq (m-1)}\bigcirc^i({\psi_\triangle\lor\psi_\square\lor\psi_{\triangle\square}\lor\bigvee_{0\leq{n}<{|\mathcal{A}}|}\psi_{\square\triangle^n}})$, then $\bigcirc^m$ is replaceable by $S_{\{\triangle,\square\}}$ on $M$.    
\end{proposition}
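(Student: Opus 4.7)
The plan is to prove the statement by induction on $m \geq 1$, using Theorem~\ref{theorem:characterisation} both as the base case and as the single-step replacement engine at each inductive step. Throughout, I will abbreviate the disjunction $\psi_\triangle \lor \psi_\square \lor \psi_{\triangle\square} \lor \bigvee_{0 \leq n < |\mathcal{A}|} \psi_{\square\triangle^n}$ as $\Psi$.

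For the base case $m=1$, the hypothesis $M \models \bigwedge_{0 \leq i \leq 0} \bigcirc^i(\Psi)$ reduces to $M \models \Psi$, which by Theorem~\ref{theorem:characterisation} is exactly the condition for $\bigcirc$ to be replaceable by $S_{\{\triangle,\square\}}$ on $M$, so any witnessing sequence $s \in S_{\{\triangle,\square\}}$ for Theorem~\ref{theorem:characterisation} is the desired replacement for $\bigcirc^1$.

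For the inductive step, assume the proposition holds for $m$ and suppose $M \models \bigwedge_{0 \leq i \leq m} \bigcirc^i(\Psi)$. I first extract the $i=0$ conjunct $M \models \Psi$ and invoke Theorem~\ref{theorem:characterisation} to obtain some $s_0 \in S_{\{\triangle,\square\}}$ with $M_\bigcirc = M_{s_0}$. I then observe that $M_\bigcirc$ inherits the hypothesis shifted down by one: for each $j \in \{0,\ldots,m-1\}$, the semantic clause for $\bigcirc$ gives $M_\bigcirc \models \bigcirc^j(\Psi)$ iff $M \models \bigcirc^{j+1}(\Psi)$, and since $j+1 \leq m$ the latter is a conjunct of the original hypothesis. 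Hence $M_\bigcirc \models \bigwedge_{0 \leq j \leq m-1} \bigcirc^j(\Psi)$, and the inductive hypothesis applied to $M_\bigcirc$ yields $s' \in S_{\{\triangle,\square\}}$ with $(M_\bigcirc)_{\bigcirc^m} = (M_\bigcirc)_{s'}$.

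Concatenating the two sequences closes the induction: $M_{\bigcirc^{m+1}} = (M_\bigcirc)_{\bigcirc^m} = (M_\bigcirc)_{s'} = (M_{s_0})_{s'} = M_{s_0 s'}$, so $s_0 s' \in S_{\{\triangle,\square\}}$ witnesses replaceability of $\bigcirc^{m+1}$ on $M$. No step is genuinely subtle; the only piece of bookkeeping worth isolating is the shift observation that passes the stacked hypothesis on $M$ to the analogous hypothesis on $M_\bigcirc$, which is just an unfolding of the semantic clause for $\bigcirc$.
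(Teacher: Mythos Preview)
Your proof is correct and follows essentially the same approach as the paper: both apply Theorem~\ref{theorem:characterisation} at each successive model $M_{\bigcirc^i}$ and concatenate the resulting sequences. The only difference is packaging: the paper constructs all the $s_i$ at once and concatenates $s_0\cdots s_{m-1}$ directly, leaving the verification that this works implicit, whereas you make the induction explicit and bundle the tail into a single $s'$ furnished by the inductive hypothesis.
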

\begin{proof}
Assume $M\models\bigwedge_{0\leq{i}\leq (m-1)}\bigcirc^i({\psi_\triangle\lor\psi_\square\lor\psi_{\triangle\square}\lor\bigvee_{0\leq{n}<{|\mathcal{A}}|}\psi_{\square\triangle^n}})$. 
Then, 
for all $i\geq 0\leq (m-1)$: $M_{\bigcirc^i}\models \psi_\triangle\lor\psi_\square\lor\psi_{\triangle\square}\lor\bigvee_{0\leq{n}<{|\mathcal{A}}|}\psi_{\square\triangle^n}$, and
therefore, by Theorem~\ref{theorem:characterisation},  
$\bigcirc$ is replaceable by $S_{\{\triangle,\square\}}$ on $M_{\bigcirc^i}$. Let $s_i$ be a sequence that replaces  $\bigcirc$ on $M_{\bigcirc^i}$.
Then the sequence $s_0...s_i...s_{m-1} $ replaces $\bigcirc^m$ on $M$.




\end{proof}



\section{Conclusion}

We have introduced a logical framework containing dynamic operators to reason about asynchronous as well as synchronous 
threshold-induced monotonic changes in social networks. We 
 showed that, in general, our synchronous operator 
 cannot be replaced 
(Theorem~\ref{theorem:genericreplaceability}), and that, on the models on which it can be replaced, only sequences of four specific types can replace it
(Theorem~\ref{theorem:necessary}).
Finally, we characterised the class of models on which the synchronous operator can be replaced (Theorem~\ref{theorem:characterisation}).  

The two most natural continuations of this work would be, first,  
to characterise 
the models on which 
sequences of (more than one) synchronous operators can be replaced,  
and, second, 
to study the replaceability of synchronous operators in the non-monotonic frameworks from  \cite{Smets2019ALA, baccinichristoffverbrugge2022}. 

Furthermore, it would be interesting to study the replaceability of richer 
operators studied in 
 epistemic/doxastic  settings such as \cite{baltagdynamic2018, howmakefriends, Seligmanetal11, Seligmanetal:synthese, sel-gonzales}, for instance 
the network announcements in \cite{Seligmanetal11, Seligmanetal:synthese, sel-gonzales} or the message passing updates in \cite{sel-gonzales}.
In this direction, 
 we could compare 
which models different updates can reach, as done in \cite{BaltagSmetsMerge,baltagdynamic2018}. 
In particular, 
it would be interesting to investigate 
what types of group knowledge 
are 
reachable
by different social network dynamic updates. 


\section*{Acknowledgments}
Zoé Christoff acknowledges support from the project Social Networks and Democracy (VENI project number Vl.Veni.201F.032) financed by the Netherlands
Organisation for Scientific Research (NWO). 

\bibliographystyle{eptcs}
\bibliography{name}

\section*{Appendix}\label{appendix}

\subsection*{Proof sketch of Theorem~\ref{theorem:completeness}}

\begin{proof}

The proof uses standard methods.

[\textit{Soundness}] The soundness of the reduction axioms for the dynamic operators $\triangle,\square,\bigcirc$ follows from the fact that they spell out the model updates in Def.~\ref{def:mdiffusion}, Def.~\ref{def:modelupdate} and Def.~\ref{def:diachronic} respectively. The soundness of the axioms $\triangle N_{ab}\leftrightarrow N_{ab}$ and $\square f_a \leftrightarrow f_a$ follows from the fact that the model update $M_\triangle$ does not alter the connections between agents (Def.~\ref{def:mdiffusion}), and the fact that, respectively, the model update $M_\square$ does not alter the features of the agents (Def.~\ref{def:modelupdate}). The soundness of the axioms $\triangle f_a \leftrightarrow f_a \lor f^\tau_{N(a)}$ can be shown as in \cite{baltagdynamic2018}: by Def.~\ref{def:diachronic}, the soundness of the axiom $\bigcirc f_a \leftrightarrow f_a \lor f^\tau_{N(a)}$ is shown in the same way.  

As an example, we prove the validity of $\square N_{ab} \leftrightarrow N_{ab} \lor sim^\omega_{ab}$. 

Consider a model ${M= \langle{\mathcal{N},\mathcal{V},\omega,\tau}\rangle}$. $M\models\square N_{ab}$ iff, by Def.~\ref{def:semantics}, $M_\square\models N_{ab}$ iff, by Def.~\ref{def:modelupdate} either $(a,b)\in\mathcal{N}$, or $\frac{|(\mathcal{V}(a)\cap\mathcal{V}(b))\cup(\mathcal{F}\setminus({\mathcal{V}(a)}\cup{{\mathcal{V}(b)}))|}}{|\mathcal{F}|}\geq{\omega}$. 

This holds iff $M\models{N_{ab}}$ or $\frac{|(\mathcal{V}(a)\cap\mathcal{V}(b))\cup(\mathcal{F}\setminus({\mathcal{V}(a)}\cup{{\mathcal{V}(b)}))|}}{|\mathcal{F}|}\geq{\omega}$. 

We now show that $\frac{|(\mathcal{V}(a)\cap\mathcal{V}(b))\cup(\mathcal{F}\setminus({\mathcal{V}(a)}\cup{{\mathcal{V}(b)}))|}}{|\mathcal{F}|}\geq{\omega}$ iff $M\models sim^\omega_{ab}$. 

[$\Rightarrow$] If $\frac{|(\mathcal{V}(a)\cap\mathcal{V}(b))\cup(\mathcal{F}\setminus({\mathcal{V}(a)}\cup{{\mathcal{V}(b)}))|}}{|\mathcal{F}|}\geq{\omega}$, there exist a subset $E\subseteq\mathcal{F}$, namely the set $(\mathcal{V}(a)\cap\mathcal{V}(b))\cup(\mathcal{F}\setminus({\mathcal{V}(a)}\cup{{\mathcal{V}(b)}}))$  such that for all $f\in{E}$, $M\models f_a\leftrightarrow f_b$ (indeed the set $(\mathcal{V}(a)\cap\mathcal{V}(b))\cup(\mathcal{F}\setminus({\mathcal{V}(a)}\cup{{\mathcal{V}(b)}))}$ contains by definition all and only those features that either both agents have or that both do not have). This in turn implies that $M\models\bigvee_{\{E\subseteq{\mathcal{F}}: \frac{|E|}{|\mathcal{F}|}\geq{\omega}\}}\bigwedge_{f\in{E}}(f_a\leftrightarrow{f_b})$. Thus, $M\models{sim^\omega_{ab}}$. 

[$\Leftarrow$] 
Now, assume that $M\models{sim^\omega_{ab}}$. This holds iff $M\models\bigvee_{\{E\subseteq{\mathcal{F}}: \frac{|E|}{|\mathcal{F}|}\geq{\omega}\}}\bigwedge_{f\in{E}}(f_a\leftrightarrow{f_b})$. This means that there exist a subset $E\subseteq\mathcal{F}$, such that $\frac{|E|}{|\mathcal{F}|}\geq\omega$, and such that for all $f\in E$, $M\models f_a\leftrightarrow f_b$, i.e. $f\in V(a)$ iff $f\in V(b)$. From this, it is clear that $E$ must be a subset of $(\mathcal{V}(a)\cap\mathcal{V}(b))\cup(\mathcal{F}\setminus({\mathcal{V}(a)}\cup{{\mathcal{V}(b)}))}$. From this and the fact that $\frac{|E|}{|\mathcal{F}|}\geq\omega$, we can conclude that $\frac{|(\mathcal{V}(a)\cap\mathcal{V}(b))\cup(\mathcal{F}\setminus({\mathcal{V}(a)}\cup{{\mathcal{V}(b)}))}|}{|\mathcal{F}|}\geq\omega$.

We thus proved that $\frac{|(\mathcal{V}(a)\cap\mathcal{V}(b))\cup(\mathcal{F}\setminus({\mathcal{V}(a)}\cup{{\mathcal{V}(b)}))|}}{|\mathcal{F}|}\geq{\omega}$ iff $M\models sim^\omega_{ab}$.
From above we know that $M\models \square N_{ab}$ iff $M\models f_a$ or $\frac{|(\mathcal{V}(a)\cap\mathcal{V}(b))\cup(\mathcal{F}\setminus({\mathcal{V}(a)}\cup{{\mathcal{V}(b)}))|}}{|\mathcal{F}|}\geq{\omega}$. We can therefore conclude that  $M\models \square N_{ab}$ iff $M\models f_a$ or $M\models sim^\omega_{ab}$. 

The soundness of the axiom $\bigcirc N_{ab} \leftrightarrow N_{ab} \lor sim^\omega_{ab}$ is proven in the same way.
As usual, the soundness of the distributivity of the dynamic operators over conjunction and the clauses for negation can be proven by induction on the length of formulas.
Finally, validity preservation of the inference rule in Table~\ref{table:axiom1} can be shown by induction on the structure of $\varphi$.

[\textit{Completeness}] Completeness is proven in the standard way by defining a translation from the dynamic language into the static fragment of the language, see for instance \cite{DEL, kooiredexprs}.

\end{proof}

\subsection*{Proof of Proposition~\ref{prop:equivalencemonotonic}}

\begin{proof}
    
        [First point]

        [$\Rightarrow$]  Let a model ${M= \langle{\mathcal{N},\mathcal{V},\omega,\tau}\rangle}$ be given. Assume, towards a contradiction, that $\triangle$ is equivalent to $\bigcirc$ on $M$, and that $M\not\models{\psi_{\triangle}}$. By the definition of $\psi_\triangle$ in Def.~\ref{def:psiabbreviations}, $M\not\models{\bigwedge_{a,b\in \mathcal{A}}({N_{ab}}\lor\neg{sim^\omega_{ab}})}$. Since $M\not\models{\bigwedge_{a,b\in \mathcal{A}}({N_{ab}}\lor\neg{sim^\omega_{ab}})}$, it follows that there are $a,b\in \mathcal{A}$ such that $M\models{\neg{N_{ab}}\wedge{sim^\omega_{ab}}}$. From this, and the reduction axioms for the dynamic modalities $\bigcirc$ and $\triangle$ in Table~\ref{table:axiom1}, respectively, it follows that $M\models{\bigcirc{N_{ab}}}$ and $M\not\models{\triangle{N_{ab}}}$. By Def.~\ref{def:equivalences}, this contradicts the initial assumption that $\triangle$ is equivalent to $\bigcirc$ on $M$. 
        
        [$\Leftarrow$] Let a model ${M= \langle{\mathcal{N},\mathcal{V},\omega,\tau}\rangle}$ be given. Assume, towards a contradiction, that $M\models\psi_{\triangle}$, and that it is not the case that $\triangle$ is equivalent to $\bigcirc$ on $M$. 
        From the fact that $\triangle$ is not equivalent to $\bigcirc$ on $M$, by Obs.~\ref{obs:first}, it follows that it must be the case that $M_{\triangle}$ and $M_\bigcirc$ differ on whether they satisfy some atomic formula. 
        By Obs.~\ref{observation:inclusion}, we know that: (i) for all $a\in \mathcal{A}$, for all $f\in{\mathcal{F}}$, $M\models{\bigcirc{f_a}}$ iff $M\models{\triangle{f_a}}$ (a synchronous update and a diffusion update modifies in the same way the features of the agents); (ii) for all $a,b\in \mathcal{A}$, if $M\models{\triangle{N_{ab}}}$, then $M\models{\bigcirc{N_{ab}}}$. From (i) and (ii), and the fact that $M_{\triangle}$ and $M_\bigcirc$ differ on whether they satisfy some atomic formula, it must be the case that there are $a,b,\in \mathcal{A}$ such that $M\models{\bigcirc{N_{ab}}}$ and $M\not\models{\triangle{N_{ab}}}$. By the reduction axioms for the modality $\triangle$ in Table~\ref{table:axiom1} and the fact that $M\not\models{\triangle{N_{ab}}}$, it follows that  $M\not\models{{N_{ab}}}$. By the facts that $M\not\models{{N_{ab}}}$ and that $M\models{\bigcirc{N_{ab}}}$, by the reduction axioms for the modality $\bigcirc$, it must be the case that $M\models{{sim^\omega_{ab}}}$. 
        We therefore know that it is both the case that  $M\not\models{{N_{ab}}}$ and that $M\models{{sim^\omega_{ab}}}$. Therefore for some $a,b\in\mathcal{A}$, it is true that $M\models{\neg{N_{ab}}\wedge{sim^\omega_{ab}}}$, i.e. $M\models\bigvee_{a,b\in\mathcal{A}}(\neg N_{ab}\wedge sim^\omega_{ab})$, which implies that $M\not\models\bigwedge_{a,b\in\mathcal A}(N_{ab}\lor \neg sim^\omega_{ab})$. This means that $M\not\models\psi_\triangle$. This contradicts the initial assumption that $M\models{\psi_\triangle}$.

        [Second point]

        [$\Rightarrow$] Let a model ${M= \langle{\mathcal{N},\mathcal{V},\omega,\tau}\rangle}$ be given. Assume, towards a contradiction, that, $\square$ is equivalent to $\bigcirc$ on $M$, and that 
$M\not\models\psi_\square$. From the definition of $\psi_\square$ in Def.~\ref{def:psiabbreviations}, it follows that $M\not\models{\bigwedge_{a\in \mathcal{A}}\bigwedge_{f\in{F}}({f_a}\lor\neg{f^\tau_{N(a)}})}$. Since, $M\not\models{\bigwedge_{a\in \mathcal{A}}\bigwedge_{f\in{F}}({f_a}\lor\neg{f^\tau_{N(a)}})}$, it follows that there are $a\in \mathcal{A}$, and $f\in{\mathcal{F}}$ such that $M\models{{\neg{f_a}}\wedge{f^\tau_{N(a)}}}$. From this and the reduction axioms for the dynamic modalities $\bigcirc$ and $\square$ in Table~\ref{table:axiom1}, it follows that $M\models{\bigcirc{f_a}}$ and $M\not\models{\square{f_a}}$. By Def.~\ref{def:equivalences}, this contradicts the initial assumption that $\square$ is equivalent to $\bigcirc$ on $\mathcal{M}$. 
        
        [$\Leftarrow$] Let a model ${M= \langle{\mathcal{N},\mathcal{V},\omega,\tau}\rangle}$ be given. Assume, towards a contradiction that, $M\models\psi_\square$, and that it is not the case that $\square$ and $\bigcirc$ are equivalent on $M$. From this and Def.~\ref{def:equivalences}, we know that $M_{\square}\neq M_\bigcirc$. By Obs.~\ref{obs:first}, it follows that it must be the case that $M_{\square}$ and $M_\bigcirc$ differ on whether they satisfy some atomic proposition. By Obs.~\ref{observation:inclusion}, we know that: (i) for all $a,b\in \mathcal{A}$, $M\models{\bigcirc{N_{ab}}}$ iff $M\models{\square{N_{ab}}}$ (a network update and a synchronous update modifies the network in exactly the same way); (ii) for all $a\in \mathcal{A}$, for all $f\in{\mathcal{F}}$, if $M\models{\square{f_{a}}}$, then $M\models{\bigcirc{f_{a}}}$. From (i) and (ii) and the fact that $M_{\square}$ and $M_\bigcirc$ differ on whether they satisfy some atomic proposition, it must the case that there are $a\in \mathcal{A}$ and $f\in{\mathcal{F}}$, such that $M\models{\bigcirc{f_{a}}}$ and $M\not\models{\square{f_{a}}}$. By the fact that $M\not\models{\square{f_{a}}}$ and the reduction axioms for the $\square$ modality in Table~\ref{table:axiom1}, it follows that $M\not\models{{f_{a}}}$. By the fact that $M\not\models{{f_{a}}}$ and $M\models{\bigcirc{f_{a}}}$, by the reduction axioms for the $\bigcirc$ modality in Table~\ref{table:axiom1}, it follows that it must be the case that $M\models{{f^\tau_{N(a)}}}$. We therefore know that it is the case that: $M\not\models{{f_{a}}}$ and $M\models{{f^\tau_{N(a)}}}$. Therefore, there exist $a\in\mathcal{A}$ and $f\in\mathcal{F}$ such that  $M\models \neg{f_a} \wedge f^\tau_{N(a)}$. From this, it follows that $M\models\bigvee_{a\in\mathcal{A}}\bigvee_{f\in\mathcal{F}}(\neg f_a \wedge f^\tau_{N(a)})$, and, therefore,  $M\not\models\bigwedge_{a\in\mathcal{A}}\bigwedge_{f\in\mathcal{F}}( f_a \lor \neg f^\tau_{N(a)})$. This means that $M\not\models\psi_\square$. We have therefore reached a contradiction with the initial assumption that $M\models\psi_\square$.

[Third point]

[$\Rightarrow$] Let a model ${M= \langle{\mathcal{N},\mathcal{V},\omega,\tau}\rangle}$ be given. Assume, towards a contradiction, that, $\triangle\square$ is equivalent to $\bigcirc$ on $M$, and that $M\not\models\psi_{\triangle\square}$. By the definition of $\psi_{\triangle\square}$ in Def.~\ref{def:psiabbreviations}, we know that $M\not\models\bigwedge_{{a,b}\in \mathcal{A}}(\neg{N_{ab}}\rightarrow({sim^{\omega}_{ab}}\leftrightarrow\triangle{{sim^\omega_{ab}})})$. 
From this, it follows that, for some $a,b\in \mathcal{A}$, $M\models{\neg{N_{ab}}}$ and $M\not\models{({sim^{\omega}_{ab}}\leftrightarrow\triangle{{sim^\omega_{ab}})}}$. One of the following two must be the case: (i) $M\models{{sim^{\omega}_{ab}}}$ and $M\not\models\triangle{{sim^\omega_{ab}}}$; (ii) $M\not\models{{sim^{\omega}_{ab}}}$ and $M\models\triangle{{sim^\omega_{ab}}}$. Assume that (i) is the case: the facts that $M\models{{sim^{\omega}_{ab}}}$ and $M\not\models\triangle{{sim^\omega_{ab}}}$, together with the fact that $M\models{\neg{N_{ab}}}$ and the reduction axioms in Table~\ref{table:axiom1}, imply that $M\models{\bigcirc{N_{ab}}}$ and $M\not\models{\triangle\square{N_{ab}}}$. This implies that $\bigcirc$ is not equivalent to $\triangle\square$ on $M$, contrary to our initial assumption. It must therefore be the case that (ii) holds. Assume that (ii) is true, i.e. that $M\not\models{{sim^{\omega}_{ab}}}$ and $M\models\triangle{{sim^\omega_{ab}}}$. These assumptions, together with the fact that $M\not\models{N_{ab}}$ and the reduction axioms in Table~\ref{table:axiom1}, imply that $M\not\models{\bigcirc{N_{ab}}}$ and $M\models{\triangle\square{N_{ab}}}$. By Def.~\ref{def:equivalences}, this contradicts the initial assumption that $\bigcirc$ and $\triangle\square$ are equivalent on $M$.
Since neither (i) nor (ii) are possible, we have established that if $\bigcirc$ is equivalent $\triangle\square$ on $M$, then $M\models\psi_{\triangle\square}$.
        
        [$\Leftarrow$] Let a model ${M= \langle{\mathcal{N},\mathcal{V},\omega,\tau}\rangle}$ be given. Assume, towards a contradiction that $M\models\psi_{\triangle\square}$, and that $\triangle\square$ is not equivalent to $\bigcirc$ on $M$. From the fact that $\triangle\square$ is not equivalent to $\bigcirc$ on $M$, by Def.~\ref{def:equivalences}, it follows that $M_{\triangle\square}\neq M_\bigcirc$. By Obs.~\ref{obs:first}, we know that $M_{\triangle\square}$ and $M_\bigcirc$ must differ on whether they satisfy some atomic proposition. By Obs.~\ref{observation:inclusion}, and by Def.~\ref{def:modelupdate} and Def.~\ref{def:diachronic}, we know that for all $a\in \mathcal{A}$, and all $f\in{\mathcal{F}}$, $M\models{\bigcirc{f_{a}}}$ iff $M\models{\triangle{f_{a}}}$ iff $M\models{\triangle\square{f_{a}}}$ (informally, this simply mean that, since a network update does not affect the agent's features, and a synchronous update and a diffusion update change the features in the same way, the features of the agents after one synchronous update are the same as those after one diffusion update followed by a subsequent network update). From this and the fact that $M_{\triangle\square}$ and $M_\bigcirc$ must differ on whether they satisfy some atomic proposition, it follows that one of the following two cases must hold: (i) there are $a,b$ such that $M\models{\triangle\square{N_{ab}}}$, and $M\not\models{\bigcirc{N_{ab}}}$; (ii) there are $a,b$ such that $M\not\models{\triangle\square{N_{ab}}}$, and $M\models{\bigcirc{N_{ab}}}$. 
        
        Assume that (i) is the case, i.e. there are $a,b$ such that $M\models{\triangle\square{N_{ab}}}$, and $M\not\models{\bigcirc{N_{ab}}}$; 
        from the fact that $M\not\models{\bigcirc{N_{ab}}}$ and the reduction axioms for $\bigcirc$ in Table~\ref{table:axiom1}, we know that $M\not\models{N_{ab}\lor sim^\omega_{ab}}$, which implies that $M\models\neg{N_{ab}}$ and $M\models\neg sim^\omega_{ab}$; furthermore, from the fact that $M\models\triangle\square N_{ab}$, and that $M\not\models N_{ab}$, we know that $M\models\triangle sim^\omega_{ab}$. If we put these together, we know that $M\models\neg{N_{ab}}$, $M\models \neg sim^\omega_{ab}$, and $M\models\triangle sim^\omega_{ab}$ at the same time: this implies that $M\models\bigvee_{a,b\in\mathcal{A}} (\neg N_{ab} \wedge \neg sim^\omega_{ab} \wedge \triangle sim^\omega_{ab})$. From this, it follows that $M\not\models\bigwedge_{a,b\in\mathcal{A}} (\neg N_{ab} \rightarrow ( sim^\omega_{ab} \leftrightarrow \triangle sim^\omega_{ab}))$: by Def. ~\ref{def:psiabbreviations}, it follows that $M\not\models\psi_{\triangle\square}$, which contradicts our initial assumption that $M\models\psi_{\triangle\square}$. 
        
        Since (i) is not possible, it must be the case that (ii) holds, i.e. there are $a,b$ such that $M\not\models{\triangle\square{N_{ab}}}$, and $M\models{\bigcirc{N_{ab}}}$. From the fact that $M\not\models{\triangle\square N_{ab}}$ and the reduction axioms for $\triangle$ and $\square$, we know that $M\not\models\triangle sim^\omega_{ab}$ and $M\not\models N_{ab}$. From the fact that $M\models\bigcirc N_{ab}$ and $M\not\models N_{ab}$, by the reduction axioms for $\bigcirc$, we know that $M\models sim^\omega_{ab}$. Summarising the facts above, we therefore know that $M\models \neg N_{ab}$, $M\models sim^\omega_{ab}$ and $M\models \neg \triangle sim^\omega_{ab}$: this means that $M\models\bigvee_{a,b\in\mathcal{A}} (\neg N_{ab} \wedge sim^\omega_{ab} \wedge \neg \triangle sim^\omega_{ab})$. This implies the fact that $M\not\models\bigwedge_{{a,b}\in \mathcal{A}}(\neg{N_{ab}}\rightarrow({sim^{\omega}_{ab}}\leftrightarrow\triangle{{sim^\omega_{ab}})})$. By Def. ~\ref{def:psiabbreviations}, it follows that $M\not\models\psi_{\triangle\square}$, which contradicts our initial assumption that $M\models\psi_{\triangle\square}$. 

        Since neither (i) nor (ii) are possible, we have established that if $M\models\psi_{\triangle\square}$, then $\bigcirc$ must be equivalent to $\triangle\square$ on $M$.
        
        
[Fourth point]

[$\Rightarrow$] Let a model ${M= \langle{\mathcal{N},\mathcal{V},\omega,\tau}\rangle}$ be given. Assume, towards a contradiction, that, for some arbitrary $n>0$, $\square\triangle^n$ is equivalent to $\bigcirc$ on $\mathcal{M}$ and that $M\not\models\psi_{\square\triangle^n}$. From the fact that $M\not\models\psi_{\square\triangle^n}$, by Def.~\ref{def:psiabbreviations}, it follows that $M\not\models\bigwedge_{a\in \mathcal{A}}\bigwedge_{f\in{F}}(\neg{f_a}\rightarrow({f^\tau_{N(a)}}\leftrightarrow\bigvee_{0\leq{i}\leq{n-1}}{\square\triangle^{i}{f^\tau_{N(a)}})})$. From this, it follows that there exist $a\in \mathcal{A}$ and $f\in{\mathcal{F}}$ such that $M\models{\neg{f_{a}}}$ and $M\not\models({f^\tau_{N(a)}}\leftrightarrow\bigvee_{0\leq{i}\leq{n-1}}{\square\triangle^{i}{f^\tau_{N(a)}})}$. 
One of the following two must hold: (i) $M\models{f^\tau_{N(a)}}$ and  $M\not\models\bigvee_{0\leq{i}\leq{n-1}}{\square\triangle^{i}{f^\tau_{N(a)}}}$, or (ii) $M\not\models{f^\tau_{N(a)}}$ and  $M\models\bigvee_{0\leq{i}\leq{n-1}}{\square\triangle^{i}{f^\tau_{N(a)}}}$.

Assume that (i) is the case and thus that $M\models{f^\tau_{N(a)}}$ and  $M\not\models\bigvee_{0\leq{i}\leq{n-1}}{\square\triangle^{i}{f^\tau_{N(a)}}}$. Since $M\models{f^\tau_{N(a)}}$, by the reduction axiom for $\bigcirc$, we know that $M\models\bigcirc f_a$. 
At the same time, since we know that $M\not\models\bigvee_{0\leq{i}\leq{n-1}}{\square\triangle^{i}{f^\tau_{N(a)}}}$, we know that for all $0\leq{i}\leq{n-1}$ $M\not\models{\square\triangle^{i}{f^\tau_{N(a)}}}$: this simply means that after a network update, there is no sequence of diffusion update after which the agent $a$ has social conformity pressure to adopt feature $f$. From this and the fact that $M\not\models f_a$, by the reduction axioms of for the dynamic modalities, we know that $M\not\models \square\triangle f_a$. Therefore, it is both the case that $M\models\bigcirc f_a$, and $M\not\models \square\triangle^n f_a$, which, by Def.~\ref{def:equivalences}, contradicts the initial assumption that $\bigcirc$ is equivalent to $\square\triangle^n$ on $M$.

Since (i) is not possible, it must be the case that (ii) holds, i.e. it is the case that $M\not\models{f^\tau_{N(a)}}$ and  $M\models\bigvee_{0\leq{i}\leq{n-1}}{\square\triangle^{i}{f^\tau_{N(a)}}}$. From the fact that $M\not\models f^\tau_{N(a)}$, and the fact that $M\not\models f_a$, by the reduction axioms for $\bigcirc$, it follows that $M\not\models\bigcirc f_a$. From the fact that $M\not\models{f^\tau_{N(a)}}$ and  $M\models\bigvee_{0\leq{i}\leq{n-1}}{\square\triangle^{i}{f^\tau_{N(a)}}}$, it follows that there is an $i\leq(n-1)$ such that $M\models \square\triangle^i f^\tau_{N(a)}$ (this means that, after a network update, at some point of a sequence of further diffusion update, agent $a$ has pressure to adopt feature $f$). From this and the reduction axioms for the dynamic modality $\triangle$, it follows that $M\models \square\triangle^{i+1} f_a$. Since $i\leq{n-1}$, and by the fact that features cannot be abandoned, it follows that $M\models{\square\triangle^n f_a}$. Thus, it is both true that $M\not\models \bigcirc f_a$, and $M\models \square\triangle^n f_a$. By Def.~\ref{def:equivalences}, this contradicts the initial assumption that $\square\triangle^n$ is equivalent to $\bigcirc$ on $M$.

Since neither (i) nor (ii) are possible, we have established that if $\bigcirc$ is equivalent $\square\triangle^n$ on $M$, then $M\models\psi_{\square\triangle^n}$.

    [$\Leftarrow$] 
Let a model ${M= \langle{\mathcal{N},\mathcal{V},\omega,\tau}\rangle}$ be given. Assume, towards a contradiction, that, for some $n>0$,  $M\models\psi_{\square\triangle^n}$ and that $\bigcirc$ is not equivalent to $\square\triangle^n$ in $M$.
From the fact that $\bigcirc$ is not equivalent to $\square\triangle^n$ in $M$, by Def.~\ref{def:equivalences}, it follows that $M_{\square\triangle^n}\neq M_\bigcirc$. Thus, by Obs.~\ref{obs:first}, we know that $M_{\square\triangle^n}$ and $ M_\bigcirc$ must differ on whether they satisfy some atomic proposition.
By Obs.~\ref{observation:inclusion} and by Def.~\ref{def:modelupdate} and Def.~\ref{def:diachronic}, we know that, for all $a,b\in \mathcal{A}$, $M\models{\bigcirc{N_{ab}}}$ iff $M\models{\square{N_{ab}}}$ iff $M\models{\square\triangle^n{N_{ab}}}$. Informally, this follows from the fact that, since a diffusion update does not affect the agent's features, and a synchronous update and a network update change the network structure in the same way, the connections between the agents after one synchronous update are the same as those obtained after one network update followed by multiple subsequent diffusion update.
From this and the fact that $M_{\square\triangle^n}$ and $ M_\bigcirc$ must differ on whether they satisfy some atomic proposition, there must exist $a\in \mathcal{A}$ and $f\in{\mathcal{F}}$, s.t. it is not the case that $M\models{\square\triangle^{n}f_a}$ iff $M\models{\bigcirc{f_a}}$. Therefore, either one of the following cases must hold: (i) $M\models{\square\triangle^{n}f_a}$ and $M\not\models{\bigcirc{f_a}}$, or (ii) $M\not\models{\square\triangle^{n}f_a}$ and $M\models{\bigcirc{f_a}}$.

Assume that (i):  $M\models{\square\triangle^{n}{f_a}}$ and $M\not\models{\bigcirc{f_a}}$. By the fact $M\not\models{\bigcirc{f_a}}$ and the reduction axioms for $\bigcirc$, we know that $M\not\models{f^\tau_{N(a)}}$ and $M\not\models f_a$. From the fact that $M\models{\square\triangle^{n}{f_a}}$, we know that there exist an $0\leq{i}\leq{n-1}$ such that $M\models\square\triangle^{i}f^\tau_{N(a)}$: this simply means that, at some point, after a network-update and potentially after subsequent diffusion updates, $a$ has pressure to adopt $f$. From this, it follows that $M\models\bigvee_{0\leq{i}\leq{n-1}}{\square\triangle^{i}{f^\tau_{N(a)}})}$. Therefore, from the above we know that $M\models\neg{f_a}$, $M\models\neg{f^\tau_{N(a)}}$ and $M\models\bigvee_{0\leq{i}\leq{n-1}}{\square\triangle^{i}{f^\tau_{N(a)}})}$. This imply that $M\models\bigvee_{a\in\mathcal{A}}\bigvee_{f\in\mathcal{F}}(\neg{f_a} \wedge \neg{f^\tau_{N(a)}} \wedge \bigvee_{0\leq{i}\leq{n-1}}{\square\triangle^{i}{f^\tau_{N(a)}})})$. Therefore $M\not\models\bigwedge_{a\in \mathcal{A}}\bigwedge_{f\in{F}}(\neg{f_a}\rightarrow({f^\tau_{N(a)}}\leftrightarrow\bigvee_{0\leq{i}\leq{n-1}}{\square\triangle^{i}{f^\tau_{N(a)}})})$. Thus, by Def.~\ref{def:psiabbreviations}, we know that $M\not\models\psi_{\square\triangle^n}$, which contradicts our initial assumption that $M\models\psi_{\square\triangle^n}$.
 

Since (i) cannot be the case, it must be the case that (ii): $M\not\models{\square\triangle^{n}{f_a}}$ and $M\models{\bigcirc{f_a}}$. 

From the fact $M\not\models{\square\triangle^{n}{f_a}}$, 
by the reduction axioms for the dynamic modalities, we know two things: $M\not\models{f_a}$, and there does not exist $0\leq i \leq n-1$, such that $M\models \square\triangle^i f^\tau_{N(a)}$ (this simply means that at no point after a network update and subsequent diffusion updates $a$ has pressure to adopt $f$; indeed, if this was the case then $a$ would at some point adopt $f$, and will never abandon it).
From the fact that $M\not\models f_a$ and that $M\models\bigcirc f_a$, by the reduction axiom for $\bigcirc$, we know that $M\models f^\tau_{N(a)}$. 
Summarising the above we know that: $M\models\neg f_a$, $M\models f^\tau_{N(a)}$ and $M\models \neg\bigvee_{0\leq{i}\leq{n-1}}{\square\triangle^{i}{f^\tau_{N(a)}}}$.
Thus, $M\models \bigvee_{a\in\mathcal{A}}\bigvee_{f\in\mathcal{F}}(\neg f_a \wedge f^\tau_{N(a)} \wedge \neg\bigvee_{0\leq{i}\leq{n-1}}{\square\triangle^{i}{f^\tau_{N(a)}})}$. Thus, $M\not\models\bigwedge_{a\in \mathcal{A}}\bigwedge_{f\in{F}}(\neg{f_a}\rightarrow({f^\tau_{N(a)}}\leftrightarrow\bigvee_{0\leq{i}\leq{n-1}}{\square\triangle^{i}{f^\tau_{N(a)}})})$. By Def.~\ref{def:psiabbreviations}, it follows that $M\not\models\psi_{\square\triangle^n}$, contrary to the initial assumption that $M\models\psi_{\square\triangle^n}$.

Since neither (i) nor (ii) are possible, we have established that if $M\models\psi_{\square\triangle^n}$, then $\bigcirc$ is equivalent to $\square\triangle^n$ on $M$.



\end{proof}

\end{document}